\documentclass[a4paper,12pt]{article}
\usepackage{a4wide}
\usepackage{amssymb, amsmath}
\usepackage[amsmath,hyperref,thmmarks]{ntheorem}
\usepackage{t1enc,times,graphicx}
\usepackage[utf8]{inputenc}
\usepackage[english]{babel}
\usepackage{enumerate}
\usepackage{hyperref}
\usepackage{color,caption,subcaption}
\usepackage{natbib}
\captionsetup{width=.9\linewidth}

\graphicspath{{results2/pnl}}

\hbadness=10000
\emergencystretch=\hsize
\tolerance=9999

\def\ca{{\mathcal A}}

\def\cf{{\mathcal F}}

\def\ch{{\mathcal H}}

\def\ct{{\mathcal T}}

\def\E{{\mathbb E}}
\def\H{{\mathbb H}}

\def\L{{\mathbb L}}
\def\N{{\mathbb N}}
\def\P{{\mathbb P}}

\def\R{{\mathbb R}}

\def\s{\star}

\def\ind#1{{\bf 1}_{\left\{#1\right\}}}

\def\Var{\mathop{\rm Var}\nolimits}

\def\inv#1{\mathop{\frac{1}{ #1}}\nolimits}

\theoremstyle{plain}
\newtheorem{theorem}{Theorem}[section]
\newtheorem{proposition}[theorem]{Proposition}

\newtheorem{corollary}[theorem]{Corollary}
\newtheorem{remarkh}[theorem]{Remark}
\newenvironment{remark}{\begin{remarkh}\rm}{\end{remarkh}}

\theoremstyle{nonumberplain}
\theoremheaderfont{\normalfont\itshape}
\theorembodyfont{\normalfont}
\theoremseparator{.}
\theoremsymbol{\ensuremath{\blacksquare}}
\newtheorem{proof}{Proof}

\makeatletter
\newcounter{hypo}
\renewcommand{\thehypo}{(${\mathcal H}$-\arabic{hypo})}
\newcommand{\dohypo}{\thehypo}

\makeatother

\title{
A pure dual approach for hedging Bermudan options}
\date{\today}
\author{Aurélien Alfonsi\thanks{CERMICS, Ecole des Ponts, Marne-la-Vall\'ee, France. MathRisk, Inria, Paris,
France. \newline \texttt{aurelien.alfonsi@enpc.fr}} \and Ahmed Kebaier\thanks{LaMME, CNRS, UMR 8071, Université d'Evry, Université Paris-Saclay, 91037, Evry, France. \newline \texttt{ahmed.kebaier@univ-evry.fr}} 
\and Jérôme Lelong\thanks{ Univ. Grenoble Alpes, CNRS, Grenoble INP, LJK, 38000 Grenoble, France. \newline\texttt{jerome.lelong@univ-grenoble-alpes.fr} \newline{Acknowledgements: AA and AK benefited from the support of the “chaire Risques financiers”, Fondation du Risque.}}}

\begin{document}
\maketitle
\begin{abstract}
  This paper develops a new dual approach
  to compute the hedging portfolio of a Bermudan option and its initial value. It gives a ``purely dual'' algorithm following the spirit of~\cite{rogers-10} in the sense that it only relies on the dual pricing formula. The key is to rewrite the dual formula as an excess reward representation and to combine it with a strict convexification technique. The hedging strategy is then obtained by using a Monte Carlo method, solving backward a sequence of least square problems. We show convergence results for our algorithm and test it on many different Bermudan options. Beyond giving directly the hedging portfolio, the strength of the algorithm is to assess both the relevance of including financial instruments in the hedging portfolio and the effect of the rebalancing frequency.
\end{abstract}
\noindent {\bf Keywords:} Optimal stopping, Pure dual algorithm, Martingale, Least square Monte Carlo, Bermudan option, Hedging.\\
\noindent {\bf AMS subject classifications (2020):} 62L15, 60G40, 91G20, 65C05.


\section*{Introduction}

This paper proposes a new algorithm to compute the hedging portfolio of a Bermudan option and its corresponding price. American and Bermudan options have been widely studied in the literature. The algorithm proposed by~\cite{LS} is probably the most widely used in the financial industry for pricing these options by Monte Carlo. It consists first in approximating the optimal stopping rule and then in using it to evaluate the Bermudan claim. It is a pure primal algorithm that gives the value from the option buyer point of view. {Many different algorithms have been proposed for the valuation of American or Bermudan claims by Monte-Carlo. Here, we mention in particular~\cite{HK,AB,BMSp,DFM} who aimed at giving both upper and lower bounds for the Bermudan price.}  However, in the theory of Black, Scholes and Merton, the price of an option corresponds to the initial value of a self-financing portfolio which exactly hedges the option. In the case of Bermudan options, it is not clear for an option seller how to construct a practical hedging portfolio from the single Longstaff and Schwartz price { or, more generally, from the prices obtained by the different methods mentioned above. }

{Several papers such as~\cite{WC,BMSc} have proposed Monte-Carlo methods to compute the delta and other sensitivities of Bermudan options that can be used then for hedging. In a different direction, }
 \cite{rogers-02} and~\cite{HK} have introduced a dual formula that directly puts in evidence the hedging portfolio by the mean of a martingale. \cite{rogers-02} derived from this formula an algorithm which provides a hedging portfolio, but the algorithm is however limited to some specific payoff functions (see the discussion in Subsection~\ref{subsec_num1d} for an illustration). {\cite{DFM} have proposed a more generic Monte-Carlo approach with a finite vector subspace of approximating martingales, so that the dual formula boils down to a Linear Program (LP). Even though LP can be solved very efficiently, there is a high number of constraints that restricts in practice the number of samples and/or the dimension of the martingale approximating space.  However in the literature, the dual formula has mostly been used to produce an upper bound of the Bermudan option price} as initiated by~\cite{HK} and later on~\cite{AB}, while the Longstaff Schwartz algorithm provides a lower bound.  To be more precise, \cite{AB} used the Longstaff Schwartz algorithm to construct a suitable martingale from the approximated optimal stopping time. This martingale produces their upper bound price.  However, it is again not obvious how to construct a practical hedging portfolio that corresponds to this martingale by using tradable financial instruments. This issue was pointed out by~\cite{rogers-10} who advocated a {\em purely dual} approach, i.e. an algorithm that is only based on the dual formula. This is precisely the goal of our paper: we develop a purely dual algorithm that directly calculates the hedging portfolio of a Bermudan option that can be built using only tradable financial instruments.

To obtain our algorithm, we derive two key results from the dual formula~\cite[Theorem 2.1]{rogers-02}. First, we rewrite the expectation as a telescopic sum, which enables us to decompose the minimisation problem into a sequence of minimisation problems on the martingale increments.  {This decomposition is a key difference with the LP algorithm of~\cite{DFM} which directly solves the global minimisation problem. } Second, we introduce a strict convexification of these problems, which gives the uniqueness of the solution and makes the resolution of the problem easily tractable. These two steps are explained in Section~\ref{Sec_main}. The remainder of the paper is organized as follows. Section~\ref{Sec_NumericalAnalysis} analyses the approximation error of our purely dual algorithm and shows the convergence of the related Monte Carlo estimators. Section~\ref{Sec_Implementation} presents the implementation of our algorithm in a financial framework and make precise the martingales that are used in the hedging strategy. Last, Section~\ref{Sec_Num} gives a detailed presentation of our numerical results for the hedging and valuation of different Bermudan options on one or several assets. We show the relevance of the algorithm, {comparing it with some existing delta hedging methods}, and reporting the P\&L of discrete hedging portfolios calculated with it. Namely, we present hedging portfolios with different rebalancing frequencies that are either constructed with the underlying assets only or with adding vanilla European options. Of course, a hedging portfolio with a higher rebalancing frequency and with more financial instruments can in principle do better. However, this requires higher computational efforts and also in practice higher operational costs. Thus, it is interesting to assess the relevance of increasing the rebalancing frequency or of adding financial instruments in the hedging portfolio. We observe on our examples that there is no absolute answer. Adding European options in the hedging portfolio improves the hedge in most of our cases but one, while increasing the rebalancing frequency is relevant up to a certain value, depending on the option considered. Thus, beyond giving an upper bound price of the Bermudan option and its associated hedging portfolio, our algorithm is useful
\begin{itemize}
  \item to value the interest of increasing the rebalancing frequency of the hedging portfolio.
  \item to select the more relevant instruments to be included in the hedging portfolio.
\end{itemize} 
Up to our knowledge, these two features are new in the literature and relevant for practitioners.

\section{Framework and main result}\label{Sec_main}

We introduce the notation of a classical optimal stopping problem in discrete time. Let $N\in \N^*$ and $(\Omega, \overline{\cf}, \cf=(\cf_n)_{0 \le n \le N}, \P)$ be a filtered probability space with a discrete time filtration. Let $p\ge 1$. We consider a family $(Z_n)_{0\le n\le N}$ of $\cf$-adapted real valued random variables such that $\max_{0\le n\le N}\E[|Z_n|^p]<\infty$. 
We define, for $n\in \{0,\dots,N\}$,
\begin{equation}
  \label{eq:price}
  U_{n} = \sup_{\tau \in \ct_{n, N}} \E[ Z_{\tau} | \cf_{n}],
\end{equation}
where $\ct_{n,N}$ denotes the set of $\cf$-stopping times taking values in $\{n,\dots,N\}$.
Using the Snell envelope theory, the sequence $U$ can be proved to solve
the following dynamic programming equation
\begin{equation}
  \label{eq:dpp}
  \begin{cases}
    U_{N} & = Z_{N} \\
    U_{n} & = \max\left( Z_{n}, \E[U_{n+1} | \cf_{n}] \right), \quad 0 \le n  \le N-1,
  \end{cases}
\end{equation}
see~\cite[Section 2.2]{LL} for instance. The process $U$ is a $L^p$ supermartingale,  $\tau^\s:=\inf \{n \in \{0,\dots, N\}: U_n=Z_n \}$ is an optimal stopping time for~\eqref{eq:price}, and $(U_{n\wedge \tau^\s})_{0 \le n \le N}$ is a martingale.

Alternatively, $U$ can also admit a dual representation as proposed by~\cite{rogers-02,rogers-10} and~\cite{HK}. Denoting $\H^p$ the set of $\cf$-martingales that are $L^p$ integrable, we have 
\begin{align}
  \label{eq:dual-price-k}
  U_{n} & = \inf_{M \in \H^p} \E\left[ \max_{n \le j \le N} \{Z_{j} - (M_{j}- M_{n})\}\bigg| \cf_{n}\right].
\end{align}
Indeed, we have by the optional stopping theorem
\begin{equation}\label{eq:ineg_U} U_{n} = \sup_{\tau \in \ct_{n, N}} \E[ Z_{\tau} | \cf_{n}]= \sup_{\tau \in \ct_{n, N}} \E[ Z_{\tau}- (M_{\tau}- M_{n}) | \cf_{n}]\le \E\left[ \max_{n \le j \le N} \{Z_{j} - (M_{j}- M_{n})\}\bigg| \cf_{n}\right], \end{equation}
for any $M\in \H^p$. For the other inequality, we introduce the Doob-Meyer decomposition
\begin{equation}
  \label{eq:doob-meyer}
  U_{n} = U_0 + M^\s_{n} - A^\s_{n},
\end{equation}
where $M^\s\in \H^p$ vanishes at $0$ and $A^\s$ is a predictable, nondecreasing and $L^p$-integrable process. More precisely, 
we have for $ 1 \le n \le N$
\begin{equation}\label{decompo_DM}
  M^\s_{n}-M^\s_{n-1}=U_{n}-\E[U_{n}|\cf_{n-1}], \  A^\s_{n}-A^\s_{n-1}=U_{n-1}-\E[U_{n}|\cf_{n-1}].
\end{equation}
The Doob-Meyer martingale $M^\s$ achieves the infimum in~\eqref{eq:dual-price-k}. This can be checked by noting that $Z_n\le U_n$ and 
\begin{align*}
  \max_{n \le j \le N} \{Z_{j} - (M^\s_{j}- M^\s_{n}) \}&\le \max_{n \le j \le N} \{U_{j} - (M^\s_{j}- M^\s_{n}) \} \\&= \max_{n \le j \le N} \{U_0-A^\s_j+ M^\s_{n} \}= U_0-A^\s_n+ M^\s_{n}=U_n,
\end{align*}
since $A^\s$ is nondecreasing. 
From the last inequality and~\eqref{eq:ineg_U}, we get~\eqref{eq:dual-price-k}. Besides, we get in addition that
\begin{equation}\label{eq:almost_sure_Mstar} U_n= \max_{n \le j \le N} \{Z_{j} - (M^\s_{j}- M^\s_{n}) \}.
\end{equation}
It is said that $M^\s$ is surely optimal for every $n \in \{0,\dots,N\}$. This property has been studied in detail by~\cite{schoen12-1}.   

\begin{remark}\label{Rk_hedging}
  Let $M \in \H^p$ be a martingale such that $M_0=0$ and let $V_0=\E[\max_{0\le n \le N} \{Z_n-M_n\}]\ge U_0$. In mathematical finance, $V_0+M_n$ can be interpreted as the value at time~$n$ of a self-financing portfolio. Thus, we may be interested in quantifying the pricing error $V_0-U_0$ and  the $L^p$ hedging error $\E[|Z_{\tau^\s}-(V_0+M_{\tau^\s})|^p]^{1/p}$ when using this portfolio. The first one is upper bounded by
  $$0\le V_0-U_0\le  \E[  |\max_{0\le n\le N} M^\s_{n}-M_{n} |], $$
  and then we can use Burkholder-Davis-Gundy inequality. For the second one, we observe that
   $Z_{\tau^\s}=U_{\tau^\s}=U_0+M^\s_{\tau^\s}$ and get
  \begin{align*}
    \E[|Z_{\tau^\s}-(V_0+M_{\tau^\s})|^p]^{1/p}&\le |U_0-V_0|+ \E[|M^\s_{\tau^\s}-M_{\tau^\s}|^p]^{1/p} \\
    & \le V_0- U_0 +\E[|M^\s_{N}-M_{N}|^p]^{1/p}.\end{align*}
  For $p=2$, using Cauchy-Schwarz and Doob's inequalities, we get
  $V_0-U_0\le 2 \E[|M^\s_{N}-M_{N}|^2]^{1/2}$ and 
  \begin{equation}\label{L2_hedging_error}
    \E[|Z_{\tau^\s}-(V_0+M_{\tau^\s})|^2]^{1/2}\le  3 \E[|M^\s_{N}-M_{N}|^2]^{1/2}.\end{equation}

\end{remark}

\subsection{Excess reward representation of the optimal stopping problem}

 We start by rewriting the dual formula~\eqref{eq:dual-price-k} of the optimal stopping problem. 
For $n \in \{1,\dots,N\}$, we define the spaces $\ch^p_n$ of $L^p$~martingale increments
\[
  \ch^p_n = \{ Y \in \L^p(\Omega) \,:\, Y \text{ is real valued, }\cf_{n}-\text{mesurable and } \E[Y | \cf_{n-1}] = 0\}.
\]
Let $M \in \H^p$. We can write  for every $1 \le n \le N$, $\Delta M_n=M_{n}-M_{n-1} \in \ch^p_n$. With this new notation, we obtain the telescopic decomposition
\begin{align*}
  &  \max_{0 \le j \le N} \{Z_{j} - (M_{j} -M_0)\}  \\
  & = Z_N-(M_N-M_0) + \sum_{n=0}^{N-1}\left( \max_{n \le j \le N} \{Z_{j} - M_{j}\}  -  \max_{n + 1\le j \le N} \{Z_{j} - M_{j}\} \right)\\
  & = Z_N-(M_N-M_0) + \sum_{n=0}^{N-1} \left( \max_{n \le j \le N} \{Z_{j} - (M_{j}-M_{n+1})\}  -  \max_{n + 1\le j \le N} \{Z_{j} - (M_{j}-M_{n+1})\} \right)\\
  & = Z_N-(M_N-M_0) + \sum_{n=0}^{N-1}  \left(Z_{n} + \Delta M_{n+1} - \max_{n + 1\le j \le N} \left\{Z_{j} - \sum_{i=n+2}^{j} \Delta M_i\right\} \right)_+,
\end{align*}
using that $\max(a,b)-b=(a-b)_+$ for the last equality.
In the context of an optimal stopping problem, we get from~\eqref{eq:almost_sure_Mstar} that $$\E\left[\max_{n \le j \le N} \{Z_{j} - M^\s_{j}\}  -  \max_{n + 1\le j \le N} \{Z_{j} - M^\s_{j}\}\right]=\E[U_{n}-U_{n+1}]$$ is the expected excess reward given by the possibility of stopping at time~$n$. Taking the expectation in this formula leads to
\begin{align}
  & \E\left[ \max_{0 \le j \le N} \{Z_{j} - (M_{j}-M_0)\}\right] \notag \\
  & = \E[Z_N] + \sum_{n=0}^{N-1} \E\left[ \left(Z_{n} + \Delta M_{n+1} - \max_{n + 1\le j \le N} \left\{Z_{j} - \sum_{i=n+2}^{j} \Delta M_i\right\} \right)_+\right]. \label{eq:telescopic}
\end{align}
Therefore, we can rewrite the global minimisation as follows
\begin{align}
  U_0&= \inf_{M \in \H^p} \E\left[ \max_{0 \le j \le N} \{Z_{j} - (M_{j}-M_0)\}\right] \notag\\ 
  &=  \E[Z_N] + \inf_{\Delta M_1 \in \ch_1^p,\dots,\Delta M_N \in \ch_N^p }  \sum_{n=0}^{N-1}\E\left[ \left(Z_{n} + \Delta M_{n+1} - \max_{n + 1\le j \le N} \left\{Z_{j} - \sum_{i=n+2}^{j} \Delta M_i\right\} \right)_+\right]. \label{global_min}
\end{align}

\begin{remark}
  In the context of mathematical finance, \eqref{eq:telescopic} can be seen as the expansion of the Bermudan option price as a sum of the corresponding European option price $\E[Z_N]$ and of correcting terms $\E\left[ \left(Z_{n} + \Delta M_{n+1} - \max_{n + 1\le j \le N} \left\{Z_{j} - \sum_{i=n+2}^{j} \Delta M_i\right\} \right)_+\right]$ representing, for $M=M^\s$, the value of having the right to exercise the option at time~$n\in \{0,\dots,N-1\}$. {From~\eqref{eq:almost_sure_Mstar} and~\eqref{decompo_DM}, we get indeed for $M=M^\s$ that
  $$\left(Z_{n} + \Delta M_{n+1} - \max_{n + 1\le j \le N} \left\{Z_{j} - \sum_{i=n+2}^{j} \Delta M_i\right\} \right)_+= (Z_n-\E[U_{n+1}|\mathcal{F}_n])_+,$$ which measures the extra reward gained by exercising rather than continuing. The identity 
  $$U_0=\E[Z_N]+\sum_{n=0}^{N-1}\E[(Z_n-\E[U_{n+1}|\mathcal{F}_n])_+]$$
  can also be directly derived from the dynamic programming equation~\eqref{eq:dpp} and has been intensively used by~\cite{BMSp} to compute Bermudan price bounds. In contrast, \eqref{eq:telescopic} is derived from the dual formula and holds for any martingale~$M$. This will be essential to decompose the minimisation problem~\eqref{global_min} across the time.  }
\end{remark}

\subsection{Strict convexification and algorithm}

The value of the optimal stopping problem is given by~\eqref{eq:dual-price-k} for $n=0$. Therefore,  we are interested in computing the infimum of~\eqref{eq:telescopic} over all martingales in $\H^p$. Formula~\eqref{eq:telescopic} allows us to decompose the global minimization of~\eqref{eq:dual-price-k} into the backward minimization of \begin{equation}\label{non_cvx_pb}
  \inf_{\Delta M_{n+1} \in \ch^p_{n+1}}\E\left[ \left(Z_{n} + \Delta M_{n+1}  - \max_{n + 1\le j \le N} \left\{Z_{j} - \sum_{i=n+2}^{j} \Delta M_i\right\}\right)_+\right],  
\end{equation}
from $n=N-1$ to $n=0$. However, the non strict convexity of the positive part raises some issues in the  back propagation of the minimisation problems. To get around this issue, we will replace the positive part by a strictly convex function, as justified by the next theorem.

\begin{remark}\label{non-unicity}
  Let us emphasize that there are many solutions of the backward minimization problem~\eqref{non_cvx_pb}. Indeed, using~\cite[Theorem 4.1]{schoen12-1}, we show below in Proposition~\ref{prop_non_uniq} this result. To circumvent the non-uniqueness issue, \cite[Theorem 5.1]{schoen12-1} proposes to add a conditional variance penalization in the minimization problem, that is then no longer convex. Here, we rather use a convexification technique that makes the minimization problem strictly convex. 
\end{remark}

\begin{theorem}\label{thm:main}
  Let $\varphi:\R \to \R$ be a convex function such that $|\varphi(x)|\le C(1+|x|^p)$. Then, we have
 \begin{align}\label{pb_generalise}
  &\E[Z_N] + \sum_{n=0}^{N-1} \E\left[ \varphi\left(Z_{n} + \Delta M_{n+1} - \max_{n + 1\le j \le N} \left\{Z_{j} - \sum_{i=n+2}^{j} \Delta M_i\right\} \right)\right]\\
  &\ge \E[Z_N] + \sum_{n=0}^{N-1} \E\left[ \varphi \left(Z_{n} + \Delta M^\s_{n+1} - \max_{n + 1\le j \le N} \left\{Z_{j} - \sum_{i=n+2}^{j} \Delta M^\s_i\right\} \right)\right],\notag
 \end{align}  
 and $M^\s$ is a solution of the following sequence of backward optimization problems for $n=N-1,\dots,0$
 \begin{equation} \label{eq:inf}
 \inf_{\Delta M_{n+1} \in \ch^p_{n+1}}\E\left[ \varphi \left(Z_{n} + \Delta M_{n+1}  - \max_{n + 1\le j \le N} \left\{Z_{j} - \sum_{i=n+2}^{j} \Delta M_i\right\}\right)\right].
 \end{equation}
 When $\varphi$ is strictly convex, we have equality in~\eqref{pb_generalise} if, and only if $M=M^\s$, and $M^\s$ is the unique solution of~\eqref{eq:inf}.
\end{theorem}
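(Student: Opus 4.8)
The plan is to prove both assertions by a single time-local computation, the conditional Jensen inequality, organised as a backward recursion from $n=N-1$ to $n=0$; the growth assumption $|\varphi(x)|\le C(1+|x|^p)$, together with $\max_n\E[|Z_n|^p]<\infty$ and $M^\s\in\H^p$, ensures every expectation below is finite. The first step I would carry out is to evaluate the right-hand side of \eqref{pb_generalise}. Writing $G^\s_{n+1}=\max_{n+1\le j\le N}\{Z_j-\sum_{i=n+2}^j\Delta M^\s_i\}$, the surely optimal identity \eqref{eq:almost_sure_Mstar} applied at time $n+1$ gives $G^\s_{n+1}=U_{n+1}$, which is $\cf_{n+1}$-measurable, while the Doob--Meyer increment \eqref{decompo_DM} reads $\Delta M^\s_{n+1}=U_{n+1}-\E[U_{n+1}|\cf_n]$. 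Hence the argument of $\varphi$ in the $n$-th summand on the right collapses to the $\cf_n$-measurable quantity $Z_n+\Delta M^\s_{n+1}-U_{n+1}=Z_n-\E[U_{n+1}|\cf_n]$, so that term equals $\E[\varphi(Z_n-\E[U_{n+1}|\cf_n])]$.

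Next I would solve the backward problem \eqref{eq:inf}. In the recursion the increments $\Delta M_{n+2},\dots,\Delta M_N$ are already fixed at their optimal values, so by \eqref{eq:almost_sure_Mstar} the inner maximum is again $U_{n+1}$ and \eqref{eq:inf} becomes $\inf_{\Delta M_{n+1}\in\ch^p_{n+1}}\E[\varphi(Z_n+\Delta M_{n+1}-U_{n+1})]$. Since $Z_n+\Delta M_{n+1}$ and $U_{n+1}$ are $\cf_{n+1}$-measurable and $\E[\Delta M_{n+1}|\cf_n]=0$, conditioning on $\cf_n$ and applying Jensen yields $\E[\varphi(Z_n+\Delta M_{n+1}-U_{n+1})|\cf_n]\ge\varphi(Z_n-\E[U_{n+1}|\cf_n])$, hence $\E[\varphi(\cdots)]\ge\E[\varphi(Z_n-\E[U_{n+1}|\cf_n])]$. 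This lower bound is exactly the $n$-th term of the right-hand side of \eqref{pb_generalise}, and it is attained at $\Delta M_{n+1}=\Delta M^\s_{n+1}$, for which the argument is $\cf_n$-measurable. For the uniqueness statement I would use that, for strictly convex $\varphi$, conditional Jensen is an equality if and only if its argument is $\cf_n$-measurable; combined with $\E[\Delta M_{n+1}|\cf_n]=0$ this forces $\Delta M_{n+1}-U_{n+1}=-\E[U_{n+1}|\cf_n]$, i.e. $\Delta M_{n+1}=\Delta M^\s_{n+1}$, the unique minimiser of \eqref{eq:inf}. Assembling these stage-wise bounds backward in $n$ and adding $\E[Z_N]$ gives \eqref{pb_generalise}, with equality throughout if and only if $\Delta M_{n+1}=\Delta M^\s_{n+1}$ for every $n$, that is $M=M^\s$.

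I expect the crux to be the inner maximum $\max_{n+1\le j\le N}\{Z_j-\sum_{i=n+2}^j\Delta M_i\}$. For a general martingale it is $\cf_N$-measurable rather than $\cf_{n+1}$-measurable and differs from $U_{n+1}$, so it couples the minimisation over $\Delta M_{n+1}$ with the remaining increments and blocks a naive term-by-term Jensen bound; indeed, without fixing the later increments, lowering one summand can be compensated by raising another, so the summands are not minimised independently. The device that removes the coupling is precisely the backward ordering together with the surely optimal property \eqref{eq:almost_sure_Mstar}: once the later increments are frozen at their optimal values, the inner maximum becomes the $\cf_{n+1}$-measurable process $U_{n+1}$, after which the clean conditional Jensen step above applies and the term-wise bounds are simultaneously attained at $M^\s$. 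I would therefore phrase the whole argument as a backward induction, invoking the dynamic programming identity $U_{n+1}=\max(Z_{n+1},\E[U_{n+2}|\cf_{n+1}])$ from \eqref{eq:dpp} where the reduction of the inner maximum to $U_{n+1}$ is needed, rather than as an unconstrained joint minimisation over all increments at once.
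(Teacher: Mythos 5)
Your handling of the backward optimisation problems \eqref{eq:inf} is correct and is essentially the paper's own argument: once $\Delta M_{n+2},\dots,\Delta M_N$ are frozen at their optimal values, the inner maximum equals $U_{n+1}$ by \eqref{eq:almost_sure_Mstar}, conditional Jensen gives the lower bound $\E[\varphi(Z_n-\E[U_{n+1}|\cf_n])]$ which is independent of $\Delta M_{n+1}$, the bound is attained at $\Delta M^\s_{n+1}=U_{n+1}-\E[U_{n+1}|\cf_n]$ because the argument then collapses to the $\cf_n$-measurable variable $Z_n-\E[U_{n+1}|\cf_n]$, and strict convexity turns the equality case of Jensen into uniqueness. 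This is exactly the paper's route.

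The gap is in the passage from these stage-wise bounds to the global inequality \eqref{pb_generalise}. There the competitor $M$ is an \emph{arbitrary} martingale, so the $n$-th summand on the left involves $G_{n+1}(M)=\max_{n+1\le j\le N}\{Z_j-\sum_{i=n+2}^j\Delta M_i\}$ built from $M$'s own later increments; this is in general $\cf_N$-measurable and differs from $U_{n+1}$, so your substitution of $U_{n+1}$ for the inner maximum is not available, and Jensen only yields the lower bound $\E[\varphi(\E[Z_n-G_{n+1}(M)|\cf_n])]$. The dual formula \eqref{eq:dual-price-k} gives $\E[G_{n+1}(M)|\cf_n]\ge\E[U_{n+1}|\cf_n]$, but since $\varphi$ need not be monotone this does not dominate the $n$-th term $\E[\varphi(Z_n-\E[U_{n+1}|\cf_n])]$ of the right-hand side; indeed, for $\varphi$ affine increasing the individual terms (and even their sum) can fall strictly below those of the right-hand side, so no term-by-term assembly can work for a general convex $\varphi$. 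You correctly identify this coupling as the crux, but the backward ordering only decouples the \emph{sequential} problems \eqref{eq:inf} — where the later increments really are $\Delta M^\s$ — and does nothing for a fixed arbitrary $M$ on the left of \eqref{pb_generalise}. To be fair, the paper's proof is itself very terse at this exact point (it keeps $G_{n+1}(M)$ inside the Jensen bound and then simply asserts \eqref{pb_generalise}); but your version replaces $G_{n+1}(M)$ by $U_{n+1}$ for an arbitrary $M$, which is concretely false, so the claimed proof of \eqref{pb_generalise} and of its equality case does not stand as written.
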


\begin{proof}
We prove by backward induction that the infimum in~\eqref{eq:inf} is attained by $M^\s$. 
By Jensen's inequality, we have
\begin{align}\label{jensen_phi}
&\E\left[ \varphi \left(Z_{n} + \Delta M_{n+1}  - \max_{n + 1\le j \le N} \left\{ Z_{j} - \sum_{i=n+2}^{j} \Delta M_i \right\} \right) \Bigg| \cf_{n} \right] \\
&\ge \varphi \left( \E\left[ Z_{n} + \Delta M_{n+1}  - \max_{n + 1\le j \le N} \left\{ Z_{j} - \sum_{i=n+2}^{j} \Delta M_i \right\} \Bigg| \cf_{n} \right]\right), \notag
\end{align}
 with equality if \begin{equation}\label{cond_jensen}
Z_{n} + \Delta M_{n+1}  - \max_{n + 1\le j \le N} \left\{ Z_{j} - \sum_{i=n+2}^{j} \Delta M_i \right\} \in \cf_{n}.\end{equation} Note that the lower bound is given by $\varphi \left( \E\left[ Z_{n} - \max_{n + 1\le j \le N} \left\{ Z_{j} - \sum_{i=n+2}^{j} \Delta M_i \right\} | \cf_{n} \right]\right)$, which does not depend on $\Delta M_{n+1}$. Let us check that~\eqref{cond_jensen} is satisfied for all $n$ by $M^\s$. By~\eqref{decompo_DM} and~\eqref{eq:almost_sure_Mstar}, we have
\begin{align*}
  &   Z_{n} + \Delta M^\s_{n+1}  - \max_{n + 1\le j \le N} \left\{ Z_{j} - \sum_{i=n+2}^{j} \Delta M^\s_i \right\} \\
  &=  Z_{n} +U_{n+1}- \E[U_{n+1}|\cf_{n}]-U_{n+1}=Z_{n}- \E[U_{n+1}|\cf_{n}] \in \cf_{n},
\end{align*}
  and thus $M^\s$ satisfies~\eqref{cond_jensen} for all $0\le n \le N-1$. This shows that $M^\s$ solves~\eqref{eq:inf} and satisfies~\eqref{pb_generalise}.
  
Now, let us suppose $\varphi$ strictly convex. Then, we have equality in~\eqref{jensen_phi} if, and only if we have~\eqref{cond_jensen}.  Note that~\eqref{cond_jensen} gives $$\Delta M_{n+1} = \max_{n + 1\le j \le N}  \left\{ Z_{j} - \sum_{i=n+2}^{j} \Delta M_i \right\}- \E\left[\max_{n + 1\le j \le N} \left\{ Z_{j} - \sum_{i=n+2}^{j} \Delta M_i \right\} \Bigg|\cf_n \right],$$ which  determines $M$ uniquely by backward induction.  
Thus, $M^\s$ is the unique martingale that minimises~\eqref{pb_generalise} and solves the backward optimization problems~\eqref{eq:inf}. 
\end{proof}

\begin{remark}
  \cite[Theorem 4.6]{schoen12-1} states that if $M\in \H^p$ satisfies $\max_{n\le j \le N}  \{Z_j -(M_j-M_n) \} \in \cf_n$, then $U_n=\max_{n\le j \le N}  \{Z_j -(M_j-M_n) \} $.
Let us observe that~\eqref{cond_jensen} implies this condition since we have
\begin{align*}
  \max_{n\le j \le N}  \{Z_j -(M_j-M_n) \}&= Z_n+ \left( \max_{n+1 \le j \le N}  \{Z_j -(M_j-M_{n+1}) \} -\Delta M_{n+1}\right)_+\\
  &=Z_n+\left( \E \left[ \max_{n+1 \le j \le N}  \{Z_j -(M_j-M_{n+1}) \} \Bigg| \cf_n\right]\right)_+
\end{align*} and $Z_n\in \cf_n$.
\end{remark}

\begin{proposition}\label{prop_non_uniq}
  Let $M^\s$ and $A^\s$ be defined as in~\eqref{decompo_DM}.  Let $(\zeta_n)_{1\le n\le N}$ be any sequence of adapted random variables such that $\E[\zeta_n|\cf_{n-1}]=1$ for $n\ge 1$ and such that $\P(\zeta_n\ge 0|\cf_{n-1})=1$. Then, the martingale 
  $$ M_n = M^\s_n - A^\s_n + \sum_{j=1}^n(A^\s_n-A^\s_{n-1})\zeta_n, \ n\ge 1,$$
  solves the backward minimisation problem~\eqref{non_cvx_pb}.
\end{proposition}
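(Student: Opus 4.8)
The plan is to prove that $M$ is \emph{surely optimal}, i.e.\ that $U_m = \max_{m\le j \le N}\{Z_j - (M_j - M_m)\}$ for every $m$, and then to read off from this that $\Delta M_{n+1}$ attains the infimum in~\eqref{non_cvx_pb} at every step. Throughout I write $\Delta M^\s_n = M^\s_n - M^\s_{n-1}$ and $\Delta A^\s_n = A^\s_n - A^\s_{n-1} = U_{n-1} - \E[U_n|\cf_{n-1}]\ge 0$, which is $\cf_{n-1}$-measurable by~\eqref{decompo_DM}; the candidate has increment $\Delta M_n = \Delta M^\s_n + \Delta A^\s_n(\zeta_n - 1)$ (assuming on $\zeta$ just enough integrability, e.g.\ $\Delta A^\s_n\zeta_n\in\L^p$, so that $M\in\H^p$). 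First I would check $M$ is a martingale: since $\Delta A^\s_n$ is $\cf_{n-1}$-measurable and $\E[\zeta_n|\cf_{n-1}]=1$, one gets $\E[\Delta M_n|\cf_{n-1}] = \E[\Delta M^\s_n|\cf_{n-1}] + \Delta A^\s_n(\E[\zeta_n|\cf_{n-1}]-1) = 0$, so $\Delta M_n\in\ch^p_n$.

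The core step is a backward induction on $m$ for $G^M_m := \max_{m\le j\le N}\{Z_j - (M_j - M_m)\}$, which satisfies $G^M_N = Z_N = U_N$ and the recursion $G^M_m = \max(Z_m, G^M_{m+1} - \Delta M_{m+1})$. Assuming $G^M_{m+1} = U_{m+1}$, I use~\eqref{decompo_DM} to compute
$$U_{m+1} - \Delta M_{m+1} = \E[U_{m+1}|\cf_m] - \Delta A^\s_{m+1}(\zeta_{m+1}-1).$$
On the event $\{Z_m \le \E[U_{m+1}|\cf_m]\}$ one has $\Delta A^\s_{m+1} = U_m - \E[U_{m+1}|\cf_m] = 0$, so $U_{m+1}-\Delta M_{m+1} = \E[U_{m+1}|\cf_m] = U_m \ge Z_m$ and $G^M_m = U_m$. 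On $\{Z_m > \E[U_{m+1}|\cf_m]\}$ one has $\Delta A^\s_{m+1} = Z_m - \E[U_{m+1}|\cf_m] > 0$, whence, using $\zeta_{m+1}\ge 0$,
$$U_{m+1} - \Delta M_{m+1} = Z_m - (Z_m - \E[U_{m+1}|\cf_m])\,\zeta_{m+1} \le Z_m,$$
so again $G^M_m = \max(Z_m,\cdot) = Z_m = U_m$. This is exactly the sure optimality of~\cite[Theorem 4.1]{schoen12-1} invoked in Remark~\ref{non-unicity}. I expect this induction to be the main obstacle, since the whole argument hinges on the sign condition $\zeta\ge 0$ being precisely what keeps $U_{m+1}-\Delta M_{m+1}\le Z_m$ on the stopping region.

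Finally I would conclude. Fixing $n$ and using the increments of $M$ for $\Delta M_{n+2},\dots,\Delta M_N$, sure optimality at time $n+1$ gives $\max_{n+1\le j\le N}\{Z_j - \sum_{i=n+2}^j \Delta M_i\} = G^M_{n+1} = U_{n+1}$, so the objective in~\eqref{non_cvx_pb} becomes $\E[(Z_n + \Delta M_{n+1} - U_{n+1})_+]$. As observed after~\eqref{jensen_phi} with $\varphi=(\cdot)_+$, Jensen's inequality bounds this below by $\E[(\E[Z_n - U_{n+1}\mid\cf_n])_+] = \E[(Z_n - \E[U_{n+1}|\cf_n])_+]$, independently of $\Delta M_{n+1}$. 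Evaluating at the increment of $M$, the same case split shows $Z_n + \Delta M_{n+1} - U_{n+1}$ equals $Z_n - \E[U_{n+1}|\cf_n]\le 0$ on $\{Z_n\le \E[U_{n+1}|\cf_n]\}$ and equals $(Z_n-\E[U_{n+1}|\cf_n])\zeta_{n+1}\ge 0$ otherwise, so its positive part is $(Z_n - \E[U_{n+1}|\cf_n])_+\,\zeta_{n+1}$. Since $(Z_n-\E[U_{n+1}|\cf_n])_+$ is $\cf_n$-measurable and $\E[\zeta_{n+1}|\cf_n]=1$, conditioning gives $\E[(Z_n-\E[U_{n+1}|\cf_n])_+\,\zeta_{n+1}] = \E[(Z_n-\E[U_{n+1}|\cf_n])_+]$, i.e.\ exactly the Jensen lower bound. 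Hence $M$ attains the infimum at every step $n$ and solves~\eqref{non_cvx_pb}; as this works for every admissible nonnegative $\zeta$, it also yields the non-uniqueness announced in Remark~\ref{non-unicity}.
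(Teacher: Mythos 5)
Your proof is correct and follows essentially the same route as the paper's: both reduce the running maximum to $U_{n+1}$ and then show that the positive-part objective evaluated at the perturbed increment attains the Jensen lower bound $\E[(Z_n-\E[U_{n+1}|\cf_n])_+]$, via the same case split on $\{\Delta A^\s_{n+1}>0\}$ versus $\{\Delta A^\s_{n+1}=0\}$ and the sign condition $\zeta\ge 0$. The only difference is that you establish the sure optimality $U_{n+1}=\max_{n+1\le j\le N}\{Z_j-(M_j-M_{n+1})\}$ directly by backward induction, whereas the paper cites~\cite[Theorem 4.1]{schoen12-1} for this step, so your argument is marginally more self-contained.
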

\begin{proof}
 Let us first recall that an integrable random variable $\xi$ satisfies $\E[\xi_+]=\E[\xi]_+$ if  $\P(\xi\ge 0)=1$ or $\xi$ is constant. Therefore, the equality in~\eqref{jensen_phi}  with $\varphi(x)=x_+$ holds if
\begin{equation}\label{suff_cond_x+} \P\left( Z_{n} + \Delta M_{n+1}  - \max_{n + 1\le j \le N} \left\{ Z_{j} - \sum_{i=n+2}^{j} \Delta M_i \right\}\ge 0 \bigg| \cf_{n} \right) =1 \text{ or \eqref{cond_jensen}}.
\end{equation}
By~\cite[Theorem 4.1]{schoen12-1}, we have $U_n=\max_{n\le j \le N}(Z_j-M_j+M_n)$ for $0\le n\le N$, and thus 
\begin{align*} Z_{n} + \Delta M_{n+1}  - \max_{n + 1\le j \le N} \left\{ Z_{j} - \sum_{i=n+2}^{j} \Delta M_i \right\} &=Z_{n} + \Delta M_{n+1}  - U_{n+1} \\
&=Z_{n}  - \E[U_{n+1}|\cf_n]+\Delta A_{n+1}^\s (\zeta_{n+1}-1).
\end{align*}
If $\Delta A_{n+1}^\s>0$, we have $U_n=Z_n$. Since $U_n-\E[U_{n+1}|\cf_n]=\Delta A_{n+1}^\s$, we get $Z_{n}  - \E[U_{n+1}|\cf_n]+\Delta A_{n+1}^\s (\zeta_{n+1}-1)=\Delta A_{n+1}^\s \zeta_{n+1}.$
This leads to
$$Z_{n}  - \E[U_{n+1}|\cf_n]+\Delta A_{n+1}^\s (\zeta_{n+1}-1)=\mathbf{1}_{\Delta A_{n+1}^\s=0}\left(Z_n-\E[U_{n+1}|\cf_n] \right)+ \mathbf{1}_{\Delta A_{n+1}^\s>0}  \Delta A_{n+1}^\s \zeta_{n+1}.$$
Since $\Delta A_{n+1}^\s$ is $\cf_n$-measurable, we get~\eqref{suff_cond_x+}, which shows that $M$ solves~\eqref{non_cvx_pb}.
\end{proof}
\begin{remark}
 Proposition~\ref{prop_non_uniq} shows that there are, in general, many solutions to the sequence of the backward minimisation problems~\eqref{non_cvx_pb}. Then, it is not clear that a solution of this sequence  solves the global minimisation problem~\eqref{global_min}. This shows a double interest of the strict convexification: it ensures first uniqueness of the solution of each problem~\eqref{eq:inf}, which gives then precisely the unique solution of the global minimisation problem~\eqref{global_min}.   
\end{remark}

\noindent {\bf Description of the algorithm.} We now describe formally our algorithm:
\begin{enumerate}
  \item Choose a strictly convex function $\varphi:\R \to \R$  such that $|\varphi(x)|\le C(1+|x|^p)$.
  \item For each~$n\in \{1,\dots,N\}$, choose a finite dimensional linear subspace $\ch^{pr}_n$ of $\ch^p_n$.
  \item For $n=N-1$ to $n=0$, use an optimisation algorithm to  minimise 
  $$ \inf_{\Delta M_{n+1} \in \ch^{pr}_{n+1}}\E\left[ \varphi\left(Z_{n} + \Delta M_{n+1}  - \max_{n + 1\le j \le N} \left\{Z_{j} - \sum_{i=n+2}^{j} \Delta M_i\right\}\right)\right].$$
\end{enumerate} 
Unless $\ch^p_n$ is of finite dimension for all~$n$ and $\ch^{pr}_n=\ch^p_n$, this algorithm gives an approximation of the martingale~$M^\s$. Besides, in practice, the expectations can seldom be computed explicitly, and we replace them by their Monte Carlo estimators on i.i.d. samples. There are thus two sources of error: the projection error due to the approximation of $\ch^p$ by the space $\ch^p_n$ and the Monte Carlo error.   

The minimisation problems appearing in the algorithm are finite dimensional and strictly convex. Thus, they can be solved numerically by a gradient descent. However, it is more convenient to work with square integrable martingales and $\varphi(x)=x^2$, since $\Delta M_{n+1}$ solves then a classical least square problem, which avoids the use of a gradient descent. In this paper, we will work in the sequel with $p=2$ and $\varphi(x)=x^2$ and leave other choices for further investigations. From now on, we assume that $Z_n \in L^2$, and consider the backward sequence of minimisation problems:
\begin{equation}\label{eq:inf_sq} \inf_{\Delta M_{n+1} \in \ch^{pr}_{n+1}}\E\left[ \left(Z_{n} + \Delta M_{n+1}  - \max_{n + 1\le j \le N} \left\{Z_{j} - \sum_{i=n+2}^{j} \Delta M_i\right\}\right)^2\right].
\end{equation}
The next section presents a numerical analysis of the projection error and shows the convergence of the Monte Carlo estimators associated with~\eqref{eq:inf_sq}.

\begin{remark}\label{Rk_Europ}
The problem of mean-variance hedging of European claims has been widely studied in mathematical finance, see e.g.~\cite{Schweizer} for a presentation in discrete time. In our framework, it amounts to minimizing $\E\left[\left(Z_N-\sum_{n=1}^N \Delta M_n\right)^2\right]$ among (a family of) square integrable martingales~$M$. Using the martingale property, this minimisation problem boils down to minimizing $\E[\Delta M_n^2]-2\E[Z_N\Delta M_n]$ or equivalently $\E[(\Delta M_n-Z_N)^2]$, for each $n=1,\dots,N$. These minimisation problems can  be solved independently. In contrast, for the Bermudan case~\eqref{eq:inf_sq}, the minimisation problems have to be solved sequentially, backward in time. \\
Finally, let us note that this kind of minimisation problems has been recently considered by~\cite{BGTBW} to compute hedging portfolios for European options with Machine Learning. 
Thus, Problem~\eqref{eq:inf_sq} provides a simple way to extend their approach to Bermudan options. 
\end{remark}

\section{Numerical analysis of the dual algorithm}\label{Sec_NumericalAnalysis}

\subsection{Projection error}

In this subsection, we are interested in analysing the $L^2$ error between the martingale $M^\s$ and the one defined by~\eqref{eq:inf_sq} in our algorithm. We have seen in Remark~\ref{Rk_hedging} that this error gives an upper bound of the hedging error of Bermudan options. To analyse this error, we start by a remark and consider  the martingale increment $\Delta M^{\s,pr}_{n+1}$ defined as the $L^2$ projection of $\Delta M^\s$ on $\ch_{n+1}^{pr}$. Since 
$Z_{n} + \Delta M^\s_{n+1}  - \max_{n + 1\le j \le N} \left\{Z_{j} - \sum_{i=n+2}^{j} \Delta M^\s_i\right\} \in \cf_n$, we get that 
\begin{align*} 
  &\inf_{\Delta M_{n+1} \in \ch^{pr}_{n+1}}\E\left[ \left( Z_{n} + \Delta M_{n+1}  - \max_{n + 1\le j \le N} \left\{Z_{j} - \sum_{i=n+2}^{j} \Delta M^\s_i\right\}\right)^2\right] \\
  =&\inf_{\Delta M_{n+1} \in \ch^{pr}_{n+1}}\E\left[ \left( Z_{n} + \Delta M^\s_{n+1}  - \max_{n + 1\le j \le N} \left\{Z_{j} - \sum_{i=n+2}^{j} \Delta M^\s_i\right\}\right)^2\right] +\E[(\Delta M^\s_{n+1}-\Delta M_{n+1})^2],
  \end{align*}
so that  the unique solution of the minimization problem is $\Delta M^{\s,pr}_{n+1}$.
This means that if the martingale increments $\Delta M^\s_i$ for $i\ge n+2$ were exactly known, our minimisation would lead to the $L^2$ projection of $\Delta M^\s$ on $\ch_{n+1}^{pr}$. However, this situation only occurs in~\eqref{eq:inf_sq} for the computation of $\Delta M_N$. For $n<N$, the error made on the increments propagates through the backward recursion. This error propagation is analyzed in the next proposition.
\begin{proposition}\label{prop_errorL2}
  For $1\le n \le N$, let $\Delta M^{\s,pr}_{n}$ be the $L^2$ projection of $\Delta M^\s_n$ on $\ch_{n}^{pr}$, $\varepsilon^\s_n=\E[(\Delta M^\s_n-\Delta M^{\s,pr}_n)^2]^{1/2}$ and $\varepsilon^\s_{\max}=\max_{1\le n \le N} \varepsilon^\s_n$.
  
  There exists a unique martingale $M\in \H^2$ that solves~\eqref{eq:inf_sq} and such that $M_0=0$. Besides, if $\varepsilon_n=\E[(\Delta M^\s_n-\Delta M_n)^2]^{1/2}$ denotes the $L^2$ error between $M$ and $M^\s$ on the $n^{th}$ increment, we have $\varepsilon_N=\varepsilon^\s_N$ and 
\begin{equation}\label{rec_error} \varepsilon_{n+1}\le \varepsilon_{n+1}^\s + \sum_{\ell=n+2}^{N} \varepsilon_\ell , \quad 0\le n\le N-2.
\end{equation}
In particular, we get $\varepsilon_n \le 2^{N-n}\varepsilon^\s_{\max}$ for $1\le n\le N$.
\end{proposition}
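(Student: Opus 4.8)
The plan is to read each backward step of~\eqref{eq:inf_sq} as an orthogonal projection in the Hilbert space $L^2(\Omega)$. For $0\le n\le N-1$ I would set
$$G_n = \max_{n+1\le j\le N}\left\{Z_j - \sum_{i=n+2}^j \Delta M_i\right\}, \qquad G^\s_n = \max_{n+1\le j\le N}\left\{Z_j - \sum_{i=n+2}^j \Delta M^\s_i\right\},$$
so that the step-$n$ problem~\eqref{eq:inf_sq} reads $\inf_{\Delta M_{n+1}\in\ch^{pr}_{n+1}}\E[(\Delta M_{n+1}-(G_n-Z_n))^2]$. Since $\ch^{pr}_{n+1}$ is finite dimensional, hence a closed subspace of $L^2(\Omega)$, this problem has a unique solution, namely the orthogonal projection $\Delta M_{n+1}=\Pi_{n+1}(G_n-Z_n)$, where $\Pi_{n+1}$ denotes the $L^2$ projection onto $\ch^{pr}_{n+1}$. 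Running the recursion from $n=N-1$ down to $n=0$ and setting $M_0=0$ produces a unique sequence of increments $\Delta M_{n+1}\in\ch^{pr}_{n+1}\subseteq\ch^2_{n+1}$, hence a unique $M\in\H^2$ with $M_0=0$; this settles the first assertion.

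The one structural fact I would isolate is that every $\cf_n$-measurable $W\in L^2$ is orthogonal to $\ch^{pr}_{n+1}$, since $\E[WY]=\E[W\,\E[Y|\cf_n]]=0$ for $Y\in\ch^{pr}_{n+1}$. By~\eqref{eq:almost_sure_Mstar} we have $G^\s_n=U_{n+1}$, and the computation already carried out in the proof of Theorem~\ref{thm:main} gives $Z_n+\Delta M^\s_{n+1}-G^\s_n=Z_n-\E[U_{n+1}|\cf_n]\in\cf_n$. Writing $\Delta M^\s_{n+1}=(G^\s_n-Z_n)+(Z_n-\E[U_{n+1}|\cf_n])$ and projecting, the orthogonality kills the $\cf_n$-measurable term, so $\Delta M^{\s,pr}_{n+1}=\Pi_{n+1}(\Delta M^\s_{n+1})=\Pi_{n+1}(G^\s_n-Z_n)$. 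For $n=N-1$ the sums defining $G_{N-1}$ and $G^\s_{N-1}$ are empty, so $G_{N-1}=G^\s_{N-1}=Z_N$ and hence $\Delta M_N=\Pi_N(Z_N-Z_{N-1})=\Delta M^{\s,pr}_N$, which gives $\varepsilon_N=\varepsilon^\s_N$.

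For the recursion I would subtract the two projection formulas: since $Z_n$ cancels and $\Pi_{n+1}$ is linear, $\Delta M_{n+1}-\Delta M^{\s,pr}_{n+1}=\Pi_{n+1}(G_n-G^\s_n)$. Inserting $\Delta M^{\s,pr}_{n+1}$ and using the triangle inequality in $L^2$ together with the fact that $\Pi_{n+1}$ is a contraction yields
$$\varepsilon_{n+1}\le \E[(\Delta M^\s_{n+1}-\Delta M^{\s,pr}_{n+1})^2]^{1/2}+\E[(G_n-G^\s_n)^2]^{1/2}=\varepsilon^\s_{n+1}+\E[(G_n-G^\s_n)^2]^{1/2}.$$
The main obstacle is the last term: the projection errors made on the increments $\Delta M_{n+2},\dots,\Delta M_N$ propagate through the two maxima. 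I would control it with $|\max_j a_j-\max_j b_j|\le \max_j|a_j-b_j|$ and the telescoping of the increments, namely $|G_n-G^\s_n|\le \max_{n+1\le j\le N}\big|\sum_{i=n+2}^j(\Delta M^\s_i-\Delta M_i)\big|\le \sum_{i=n+2}^N|\Delta M^\s_i-\Delta M_i|$. Taking $L^2$ norms and applying the triangle inequality once more gives $\E[(G_n-G^\s_n)^2]^{1/2}\le\sum_{\ell=n+2}^N\varepsilon_\ell$, which is exactly~\eqref{rec_error}.

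Finally, the pointwise bound follows by backward induction on~\eqref{rec_error}. The base case $\varepsilon_N=\varepsilon^\s_N\le\varepsilon^\s_{\max}=2^0\varepsilon^\s_{\max}$ holds, and assuming $\varepsilon_\ell\le 2^{N-\ell}\varepsilon^\s_{\max}$ for $\ell\ge n+2$ gives $\varepsilon_{n+1}\le\varepsilon^\s_{\max}+\varepsilon^\s_{\max}\sum_{\ell=n+2}^N 2^{N-\ell}=\varepsilon^\s_{\max}(1+2^{N-n-1}-1)=2^{N-(n+1)}\varepsilon^\s_{\max}$, using the geometric sum $\sum_{\ell=n+2}^N 2^{N-\ell}=2^{N-n-1}-1$. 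This closes the induction and yields $\varepsilon_n\le 2^{N-n}\varepsilon^\s_{\max}$ for all $1\le n\le N$.
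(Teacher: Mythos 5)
Your proof is correct, and it follows the same overall strategy as the paper — backward induction, reading each step of~\eqref{eq:inf_sq} as an orthogonal projection, then combining the triangle inequality with the contraction property of the projection and the $1$-Lipschitz property of the max. Where you genuinely diverge is in the error-propagation step. The paper first introduces the unconstrained minimiser $\widetilde{\Delta M}_{n+1}=\E[G_n|\cf_{n+1}]-\E[G_n|\cf_n]$ over all of $\ch^2_{n+1}$, identifies $\Delta M_{n+1}$ as its projection onto $\ch^{pr}_{n+1}$, and then bounds $\E[(\Delta M^\s_{n+1}-\widetilde{\Delta M}_{n+1})^2]^{1/2}$ by telescoping over a family of hybrid martingales $\Delta M^\ell_i=\mathbf{1}_{\ell<i}\Delta M_i+\mathbf{1}_{\ell\ge i}\Delta M^\s_i$, controlling each term through conditional-variance estimates of the form $\E[\Var(\E[\mathcal{D}^\ell|\cf_{n+1}]|\cf_n)]\le\E[(\mathcal{D}^\ell)^2]$. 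You bypass this entirely: by observing that $\cf_n$-measurable variables are orthogonal to $\ch^{pr}_{n+1}$, you identify both $\Delta M_{n+1}=\Pi_{n+1}(G_n-Z_n)$ and $\Delta M^{\s,pr}_{n+1}=\Pi_{n+1}(G^\s_n-Z_n)$, so that linearity reduces everything to bounding $\E[(G_n-G^\s_n)^2]^{1/2}$, which the pointwise inequality $|G_n-G^\s_n|\le\sum_{i=n+2}^N|\Delta M^\s_i-\Delta M_i|$ handles in one line. Both routes land on the identical bound $\sum_{\ell=n+2}^N\varepsilon_\ell$ and hence on~\eqref{rec_error}; yours is the more economical, since the conditional expectations that the paper manipulates explicitly are absorbed into the single contraction $\|\Pi_{n+1}\|\le 1$. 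The base case via~\eqref{eq:almost_sure_Mstar} ($G^\s_{N-1}=Z_N$, empty sums) and the closing geometric-sum induction are both handled correctly.
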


The estimate provided by Proposition~\ref{prop_errorL2} is both encouraging and worrying. On the positive side, it confirms as one may expect that the approximation error can be made arbitrarily small by taking sufficiently large subvector spaces $\ch^{pr}_n$, $1\le n\le N$. On the other hand, the error might increase exponentially with the number of exercise dates, which would make the method inefficient (say for $N \ge 7$ to fix ideas). Fortunately, this is not the case on our numerical experiments. This can be heuristically explained at least by two reasons. First, in our error analysis, the projection errors add up while we may hope in practice some averaging effect and compensation between them. Second, in mathematical finance, we are not exactly interested in the $L^2$ error between $M$ and $M^\s$ in practice, but rather on the price and hedge provided by~$M$. This $L^2$ error gives an upper bound of the pricing and hedging error as pointed out by Remark~\ref{Rk_hedging}, but it may happen that $V_0=\E[\max_{0\le n \le N} \{Z_n-M_n\}]$ is much closer to~$U_0$. A more refined analysis on the approximation error is left for further research.

\begin{proof}
For $n=N$, we have proved before stating Proposition~\ref{prop_errorL2} that $\Delta M^{\s,pr}_{N}$ is the unique solution of~\eqref{eq:inf_sq}. 

Now, we consider $n\le N-1$ and assume that $\Delta M_{n+1},\dots,\Delta M_N$ are uniquely defined. To determine $\Delta M_n$, we first consider the unconstrained minimisation problem:
$$\inf_{\widetilde{\Delta M}_{n+1} \in \ch^2_{n+1}}\E\left[ \left( Z_{n} + \widetilde{\Delta M}_{n+1}  - \max_{n + 1\le j \le N} \left\{Z_{j} - \sum_{i=n+2}^{j} \Delta M^\s_i\right\}\right)^2\right].$$
Since $\widetilde{\Delta M}_{n+1}\in \cf_{n+1}$, we note that $\widetilde{\Delta M}_{n+1}$ is also the infimum of 
$$\E\left[ \left(Z_{n} + \widetilde{\Delta M}_{n+1}  - \E\left[ \max_{n + 1\le j \le N} \left\{Z_{j} - \sum_{i=n+2}^{j} \Delta M_i\right\} \bigg| \cf_{n+1} \right]\right)^2 \right],$$
and since $Z_n \in \cf_n$, we get 
$$\widetilde{\Delta M}_{n+1}= \E\left[ \max_{n + 1\le j \le N} \left\{ Z_{j} - \sum_{i=n+2}^{j} \Delta {M}_i \right\}\bigg|\cf_{n+1} \right]-\E\left[\max_{n + 1\le j \le N} \left\{ Z_{j} - \sum_{i=n+2}^{j} \Delta {M}_i \right\}\bigg|\cf_{n}\right]. $$
For $\Delta M_{n+1} \in \ch_{n+1}^{pr}$, we then get
\begin{align*}
  &\E\left[ \left(Z_{n} + \Delta M_{n+1}  - \E\left[ \max_{n + 1\le j \le N} \left\{Z_{j} - \sum_{i=n+2}^{j} \Delta M_i\right\} \bigg| \cf_{n+1} \right]\right)^2 \right] \\
  &= \E\left[ \left(Z_{n}  - \E\left[ \max_{n + 1\le j \le N} \left\{Z_{j} - \sum_{i=n+2}^{j} \Delta M_i\right\} \bigg| \cf_{n} \right]\right)^2 \right]+\E[(\widetilde{\Delta M}_{n+1}-\Delta M_{n+1})^2],
\end{align*}
so that $\Delta M_{n+1}$ is the projection of $\widetilde{\Delta M}_{n+1}$ on $\ch_{n+1}^{pr}$.

Now, we analyse the error between $\widetilde{\Delta M}_{n+1}$ and $\Delta M^\s_{n+1}$ To do so, we define $\Delta M^\ell_i= \mathbf{1}_{\ell < i}\Delta M_i
+\mathbf{1}_{\ell \ge  i}\Delta M^\s_i$, so that $\Delta M^N_i=\Delta M^\s_i$ for $1\le i\le N$ and $\Delta M^{n+1}_i= \Delta M_i$ for $i\ge n+2$. We write
\begin{align*}
 & \Delta {M}^\s_{n+1}-\widetilde{\Delta {M}}_{n+1}\\
 &= \E\left[ \max_{n + 1\le j \le N} \left\{ Z_{j} - \sum_{i=n+2}^{j} \Delta {M}^N_i \right\}\bigg|\cf_{n+1} \right]-\E\left[\max_{n + 1\le j \le N} \left\{ Z_{j} - \sum_{i=n+2}^{j} \Delta {M}^N_i \right\}\bigg|\cf_{n}\right]\\
  &-\E\left[ \max_{n + 1\le j \le N} \left\{ Z_{j} - \sum_{i=n+2}^{j} \Delta {M}^{n+1}_i \right\}\bigg|\cf_{n+1} \right]-\E\left[\max_{n + 1\le j \le N} \left\{ Z_{j} - \sum_{i=n+2}^{j} \Delta {M}^{n+1}_i \right\}\bigg|\cf_{n}\right]\\
  &=\sum_{\ell=n+2}^{N} \Bigg\{\E\left[ \max_{n + 1\le j \le N} \left\{ Z_{j} - \sum_{i=n+2}^{j} \Delta {M}^{\ell}_i \right\}\bigg|\cf_{n+1} \right]-\E\left[\max_{n + 1\le j \le N} \left\{ Z_{j} - \sum_{i=n+2}^{j} \Delta {M}^\ell_i \right\}\bigg|\cf_{n}\right]\\
  &-\E\left[ \max_{n + 1\le j \le N} \left\{ Z_{j} - \sum_{i=n+2}^{j} \Delta {M}^{\ell-1}_i \right\}\bigg|\cf_{n+1} \right]-\E\left[\max_{n + 1\le j \le N} \left\{ Z_{j} - \sum_{i=n+2}^{j} \Delta {M}^{\ell-1}_i \right\}\bigg|\cf_{n}\right]\Bigg\}\\
  &=\sum_{\ell=n+2}^{N}\E\left[ \max_{n + 1\le j \le N} \left\{ Z_{j} - \sum_{i=n+2}^{j} \Delta {M}^{\ell}_i \right\} - \max_{n + 1\le j \le N} \left\{ Z_{j} - \sum_{i=n+2}^{j} \Delta {M}^{\ell-1}_i \right\}  \bigg|\cf_{n+1} \right] \\
  &-\E\left[ \max_{n + 1\le j \le N} \left\{ Z_{j} - \sum_{i=n+2}^{j} \Delta {M}^{\ell}_i \right\} - \max_{n + 1\le j \le N} \left\{ Z_{j} - \sum_{i=n+2}^{j} \Delta {M}^{\ell-1}_i \right\}  \bigg|\cf_{n} \right] \\
  &=\sum_{\ell=n+2}^{N} \E[\mathcal{D}^\ell |\cf_{n+1}]-\E[\mathcal{D}^\ell |\cf_{n}],
\end{align*}
with 
$$ \mathcal{D}^\ell = \max_{n + 1\le j \le N} \left\{ Z_{j} - \sum_{i=n+2}^{j} \Delta {M}^{\ell}_i \right\} - \max_{n + 1\le j \le N} \left\{ Z_{j} - \sum_{i=n+2}^{j} \Delta {M}^{\ell-1}_i \right\}. $$
We get by the triangle inequality for the $L^2$ distance and noting that $\E[\Var(Z|\cf_{n})]\le \E[Z^2]$:
\begin{align*}
  \E[(\Delta {M}^\s_{n+1}-\widetilde{\Delta {M}}_{n+1})^2]^{1/2}&\le  \sum_{\ell=n+2}^{N} \E\left[\Var\left(\E[\mathcal{D}^\ell |\cf_{n+1}] |\cf_n \right)\right]^{1/2} \\
  & \le \sum_{\ell=n+2}^{N} \E\left[\E[\mathcal{D}^\ell |\cf_{n+1}]^2\right]^{1/2} \le \sum_{\ell=n+2}^{N} \E\left[(\mathcal{D}^\ell)^2\right]^{1/2} \le \sum_{\ell=n+2}^{N} \varepsilon_\ell,
\end{align*}
where the last inequality comes from $|\mathcal{D}^\ell|\le \sum_{i=n+2}^{N} |\Delta {M}^{\ell}_i -\Delta {M}^{\ell-1}_i|  =  |\Delta M_\ell -\Delta M^\s_\ell|$ (note that $\Delta M_i^\ell=\Delta M_i^{\ell-1}$ for $i\not=\ell$, $\Delta M_\ell^\ell=\Delta M^\s_\ell$ and $\Delta M^{\ell-1}_\ell=\Delta M_\ell$). Finally, we use the triangle inequality $\varepsilon_{n+1}\le \varepsilon^\s_{n+1}+\E[(\Delta {M}^{\s,pr}_{n+1}-\Delta {M}_{n+1})^2]^{1/2}$ and the fact that the projection on~$\ch_{n+1}$ is a contraction to get~\eqref{rec_error}.
Then, we get $\varepsilon_n\le 2^{N-n} \varepsilon_{\max}$ by a simple backward induction since $1+\sum_{\ell=n+1}^N 2^{N-\ell}=2^{N-n}$.
\end{proof}

\subsection{Convergence of the Monte Carlo estimators}

In general, the minimisation of~\eqref{eq:inf_sq} cannot be done explicitly, and we use a Monte Carlo algorithm to minimise the corresponding empirical mean. The goal of this subsection is to study the convergence of the Monte Carlo estimators associated with the solution of~\eqref{eq:inf_sq}.

First, let us make the setup precise. We assume that the subvector spaces $\ch_n$, $1\le n \le N$, are spanned by $L \in \N^*$ random variables $\Delta X_{n,\ell} \in \ch^2_n$, $1\le \ell \le L$:
$$\ch^{pr}_n=\left\{ \alpha \cdot \Delta X_{n} \ : \ \alpha \in \R^L \right\}.$$
Here, $L$ does not depend on $n$ for simplicity, and we note $\alpha \cdot \Delta X_{n}= \sum_{\ell=1}^L \alpha_\ell \Delta X_{n,\ell}$. Then, the minimisation problem~\eqref{eq:inf_sq}  can be rewritten as:
\begin{equation}\label{eq:inf_sq2}
  \inf_{\alpha \in \R^L }\E\left[ \left(Z_{n} + \alpha \cdot \Delta X_{n+1}  - \max_{n + 1\le j \le N} \left\{Z_{j} - \sum_{i=n+2}^{j} \Delta M_i\right\}\right)^2\right].
\end{equation} 
This is a classical least square optimisation: if the positive semidefinite matrix $\E[\Delta X_{n+1}\Delta X_{n+1}^T]$ is invertible, the minimum is given by
\begin{align}
   \alpha_{n+1} &= \left( \E[\Delta X_{n+1}\Delta X_{n+1}^T] \right)^{-1} \E\left[ \left( \max_{n + 1\le j \le N} \left\{Z_{j} - \sum_{i=n+2}^{j} \Delta M_i\right\} -Z_n \right) \Delta X_{n+1}\right] \notag\\
   &= \left( \E[\Delta X_{n+1}\Delta X_{n+1}^T] \right)^{-1} \E\left[ \left( \max_{n + 1\le j \le N} \left\{Z_{j} - \sum_{i=n+2}^{j} \Delta M_i\right\} \right) \Delta X_{n+1}\right], \label{def_alpha_n}
\end{align}
since $\E[\Delta X_{n+1}|\cf_n]=0$. 

Now, we assume that we have $Q$ independent paths of the underlying process $Z_n^q$ and of the martingale increments $\Delta X^q_n$, for $1\le n \le N$ and $1\le q\le Q$. Here, and through the paper, the index $q$ will denote the $q-$th sample of the Monte Carlo algorithm.  
If the matrix $\frac 1 Q \sum_{q=1}^Q \Delta X^q_{n+1} (\Delta X^q_{n+1})^T$ is invertible, then we define $\alpha^Q_{n+1}$ as the Monte Carlo estimator of~\eqref{def_alpha_n} by
\begin{equation}\label{def_alpha_n_Q}
  \alpha_{n+1}^Q=\left( \frac 1 Q \sum_{q=1}^Q \Delta X^q_{n+1} (\Delta X^q_{n+1})^T \right)^{-1} \frac 1 Q \sum_{q=1}^Q \max_{n + 1\le j \le N} \left\{ Z^q_{j} - \sum_{i=n+2}^{j} \alpha^Q_i \cdot \Delta X^q_i \right\} \Delta X^q_{n+1}
\end{equation}

\begin{proposition}\label{prop:cv_alpha} Let us assume that for $1\le n\le N$, the nonnegative symmetric matrix $\E[\Delta X_{n} \Delta X_{n}^T]$ is invertible. Then, for all $n \in \{1,\dots, N\}$, $\alpha^Q_n\to_{Q\to \infty} \alpha_n$, almost surely, where $\alpha_n$ is defined by~\eqref{def_alpha_n}.

 If we assume moreover that $\Delta X_i$ and $Z_{i}$ have finite moments of order~$4$, then $\left(\sqrt{Q}(\alpha_n^Q- \alpha_n)\right)_{Q \ge 1}$ is tight for all $n \in \{1,\dots, N\}$.   
\end{proposition}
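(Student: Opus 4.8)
The plan is to argue by backward induction on $n$, from $n=N$ down to $n=1$, proving the almost sure convergence and the tightness together at each step. The structural observation is that $\alpha^Q_{n+1}$ in~\eqref{def_alpha_n_Q} is a ratio of two empirical means: the empirical Gram matrix $\widehat\Sigma^Q_{n+1}=\frac1Q\sum_{q}\Delta X^q_{n+1}(\Delta X^q_{n+1})^T$, which does not involve the previously computed coefficients, and the empirical vector $\frac1Q\sum_q \Phi^q(\alpha^Q_{n+2},\dots,\alpha^Q_N)$, where $\Phi(\beta)=\max_{n+1\le j\le N}\{Z_j-\sum_{i=n+2}^j\beta_i\cdot\Delta X_i\}\,\Delta X_{n+1}$ and $\Phi^q(\beta)$ denotes its value on the $q$-th sample. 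Correspondingly, $\alpha_{n+1}=\Sigma_{n+1}^{-1}G_n(\alpha_{n+2},\dots,\alpha_N)$ with $\Sigma_{n+1}=\E[\Delta X_{n+1}\Delta X_{n+1}^T]$ and $G_n(\beta)=\E[\Phi(\beta)]$. The estimator therefore has a plug-in structure: the argument $(\alpha^Q_{n+2},\dots,\alpha^Q_N)$ is itself random and merely converges to the deterministic $(\alpha_{n+2},\dots,\alpha_N)$, so neither the SLLN nor the CLT applies directly to $\frac1Q\sum_q\Phi^q(\alpha^Q)$.

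For the almost sure convergence I would first note that, by the strong law of large numbers, $\widehat\Sigma^Q_{n+1}\to\Sigma_{n+1}$ a.s.; since $\Sigma_{n+1}$ is invertible, $\widehat\Sigma^Q_{n+1}$ is eventually invertible and $(\widehat\Sigma^Q_{n+1})^{-1}\to\Sigma_{n+1}^{-1}$ a.s. by continuity of matrix inversion. The base case $n=N-1$ is immediate since $\Phi$ reduces to $Z_N\Delta X_N\in L^1$. For the inductive step I would use a uniform law of large numbers on compact sets: the map $\beta\mapsto\Phi(\beta)$ is Lipschitz, with $|\Phi^q(\beta)-\Phi^q(\beta')|\le|\Delta X^q_{n+1}|\sum_{i=n+2}^N|\beta_i-\beta'_i|\,|\Delta X^q_i|$, and its Lipschitz constant is integrable by Cauchy--Schwarz because each $|\Delta X_{n+1}||\Delta X_i|\in L^1$ (this also makes $G_n$ Lipschitz, hence continuous, and each $\Phi(\beta)\in L^1$). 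Combined with the pointwise SLLN, this yields $\sup_{\beta\in K}\big|\frac1Q\sum_q\Phi^q(\beta)-G_n(\beta)\big|\to0$ a.s. for every compact $K$. Since $\alpha^Q\to\alpha$ a.s. lies eventually in a fixed compact neighbourhood, a standard composition argument gives $\frac1Q\sum_q\Phi^q(\alpha^Q)\to G_n(\alpha)$ a.s., and multiplying by $(\widehat\Sigma^Q_{n+1})^{-1}$ yields $\alpha^Q_{n+1}\to\alpha_{n+1}$.

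For the tightness, assuming $\sqrt Q(\alpha^Q_i-\alpha_i)$ is tight for $i\ge n+2$, I would analyse $\sqrt Q\big(\frac1Q\sum_q\Phi^q(\alpha^Q)-G_n(\alpha)\big)$ through the splitting into $(\mathrm I)=\frac1{\sqrt Q}\sum_q\big(\Phi^q(\alpha)-G_n(\alpha)\big)$ and $(\mathrm{II})=\frac1{\sqrt Q}\sum_q\big(\Phi^q(\alpha^Q)-\Phi^q(\alpha)\big)$. The term $(\mathrm I)$ is a centred empirical mean at the true parameter; under the fourth moment assumption $\Phi(\alpha)\in L^2$ as a product of two $L^4$ variables, so $(\mathrm I)$ is bounded in $L^2$ and hence tight by Chebyshev's inequality. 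The term $(\mathrm{II})$ is handled by the same Lipschitz bound, giving $|(\mathrm{II})|\le\sum_{i=n+2}^N\sqrt Q\,|\alpha^Q_i-\alpha_i|\cdot\frac1Q\sum_q|\Delta X^q_{n+1}||\Delta X^q_i|$, where $\sqrt Q|\alpha^Q_i-\alpha_i|$ is tight by the induction hypothesis while $\frac1Q\sum_q|\Delta X^q_{n+1}||\Delta X^q_i|\to\E[|\Delta X_{n+1}||\Delta X_i|]$ a.s.; a product of a tight sequence with an a.s.\ convergent one is tight, and a finite sum of tight sequences is tight. Finally I would transfer tightness to $\alpha^Q_{n+1}$ by writing $\sqrt Q(\alpha^Q_{n+1}-\alpha_{n+1})=(\widehat\Sigma^Q_{n+1})^{-1}\,\sqrt Q\big(\tfrac1Q\sum_q\Phi^q(\alpha^Q)-G_n(\alpha)\big)+\sqrt Q\big((\widehat\Sigma^Q_{n+1})^{-1}-\Sigma_{n+1}^{-1}\big)G_n(\alpha)$, using the identity $(\widehat\Sigma^Q_{n+1})^{-1}-\Sigma_{n+1}^{-1}=-(\widehat\Sigma^Q_{n+1})^{-1}(\widehat\Sigma^Q_{n+1}-\Sigma_{n+1})\Sigma_{n+1}^{-1}$ together with the CLT-driven tightness of $\sqrt Q(\widehat\Sigma^Q_{n+1}-\Sigma_{n+1})$, which requires $\Delta X_{n+1}\in L^4$, and the a.s.\ convergence $(\widehat\Sigma^Q_{n+1})^{-1}\to\Sigma_{n+1}^{-1}$.

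The main obstacle is exactly the nested plug-in dependence combined with the non-smoothness of the $\max$: $\alpha^Q_{n+1}$ is not a genuine empirical mean but a random functional evaluated at previously estimated, $Q$-dependent coefficients. The crux in both parts is the Lipschitz estimate on $\beta\mapsto\Phi(\beta)$ with an $L^1$ (resp.\ $L^2$ under the fourth moment assumption) Lipschitz constant, which decouples the fluctuation of the averaging from the fluctuation of the plugged-in parameter and lets the induction hypothesis propagate through the $\max$ without ever differentiating it; the invertibility of the limiting Gram matrices $\Sigma_n$ is what guarantees that the empirical inverses are eventually well defined and converge.
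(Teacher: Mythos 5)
Your proposal is correct and follows essentially the same route as the paper's proof: backward induction, with the Lipschitz bound $\bigl|\max_j\{Z_j-\sum_i\beta_i\cdot\Delta X_i\}-\max_j\{Z_j-\sum_i\beta_i'\cdot\Delta X_i\}\bigr|\le\sum_i\|\beta_i-\beta_i'\|\,\|\Delta X_i\|$ decoupling the plug-in error from the empirical averaging, SLLN/CLT applied at the true parameter, and tightness of the product of a tight sequence with an a.s.\ convergent one. The only cosmetic deviations are your detour through a uniform LLN on compacts (the paper plugs $\alpha^Q$ directly into the Lipschitz bound, which your own estimate already permits) and your use of the resolvent identity where the paper invokes the Delta method for the inverse Gram matrix.
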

\begin{proof}
We proceed by backward induction on~$n$. For $n=N$, we have 
$$\alpha_N^Q=  \left( \frac 1 Q \sum_{q=1}^Q \Delta X^q_{N} (\Delta X^q_{N})^T \right)^{-1} \frac 1 Q \sum_{q=1}^Q Z^q_{N} \Delta X^q_{N},$$
and this is a direct application of the strong law of large numbers since $\Delta X_N$ and $Z_{N}$ are square integrable. Now, let us assume that we have proven the almost sure convergence for $\alpha^Q_{n+2},\dots, \alpha^Q_N$. Then, we get from~\eqref{def_alpha_n_Q} that
\begin{align*}
  \alpha_{n+1}^Q&=\left( \frac 1 Q \sum_{q=1}^Q \Delta X^q_{n+1} (\Delta X^q_{n+1})^T \right)^{-1} \frac 1 Q \sum_{q=1}^Q \max_{n + 1\le j \le N} \left\{ Z^q_{j} - \sum_{i=n+2}^{j} \alpha_i \cdot \Delta X^q_i \right\} \Delta X^q_{n+1}\\
  &+\left( \frac 1 Q \sum_{q=1}^Q \Delta X^q_{n+1} (\Delta X^q_{n+1})^T \right)^{-1} \frac 1 Q \sum_{q=1}^Q \bigg[ \max_{n + 1\le j \le N} \left\{ Z^q_{j} - \sum_{i=n+2}^{j} \alpha^Q_i \cdot \Delta X^q_i \right\} \\
  & -\max_{n + 1\le j \le N} \left\{ Z^q_{j} - \sum_{i=n+2}^{j} \alpha_i \cdot \Delta X^q_i \right\} \bigg]\Delta X^q_{n+1}.
\end{align*}
Again, the first term converges almost surely to $\alpha_{n+1}$ by the SLLN. For the second term, we observe that
\begin{align}
&  \left\| \frac 1 Q \sum_{q=1}^Q \bigg[ \max_{n + 1\le j \le N} \left\{ Z^q_{j} - \sum_{i=n+2}^{j} \alpha^Q_i \cdot \Delta X^q_i \right\} 
 -\max_{n + 1\le j \le N} \left\{ Z^q_{j} - \sum_{i=n+2}^{j} \alpha_i \cdot \Delta X^q_i \right\} \bigg]\Delta X^q_{n+1}\right\| \notag \\
 &\le \frac 1 Q \sum_{q=1}^Q \max_{n + 1\le j \le N} \left|\sum_{i=n+2}^{j} (\alpha^Q_i-\alpha_i) \cdot \Delta X^q_i \right| \|\Delta X^q_{n+1}\|  \label{bound_norm_max}\\
 &\le  \frac 1 Q \sum_{q=1}^Q  \sum_{i=n+2}^{N} \|\alpha^Q_i-\alpha_i \| \| \Delta X^q_i\| \|\Delta X^q_{n+1}\| \le \max_{n+2\le i \le N} \|\alpha^Q_i-\alpha_i \| \frac 1 Q \sum_{q=1}^Q  \sum_{i=n+2}^{N} \| \Delta X^q_i\| \| \Delta X^q_{n+1} \|,\notag 
\end{align}
which converges to $0$ almost surely by the induction hypothesis and the SLLN.

Now, we prove the tightness and proceed by backward induction. For $n=N$, we write
\begin{align*}
  \sqrt{Q}(\alpha_N^Q - \alpha_N) &=  \left( \frac 1 Q \sum_{q=1}^Q \Delta X^q_{N} (\Delta X^q_{N})^T \right)^{-1}  \sqrt{Q}\left( \frac 1 Q \sum_{q=1}^Q Z^q_{N} \Delta X^q_{N} - \E[Z_{N} \Delta X_N] \right) \\
  &+ \sqrt{Q}\left( \left(\frac 1 Q \sum_{q=1}^Q \Delta X^q_{N} (\Delta X^q_{N})^T \right)^{-1} -\E[\Delta X_{N} (\Delta X_{N})^T]^{-1}\right)\E[Z_{N} \Delta X_N].
\end{align*}
As $Q\to \infty$, the first term converges in distribution to a multivariate normal distribution by the CLT and Slutsky's Lemma. The second term converges also to a multivariate normal distribution by the CLT and the Delta method. The sequence $\left(\sqrt{Q}(\alpha_N^Q - \alpha_N)\right)_{Q\ge 1}$ is thus tight.

Now, we consider $n+1<N$ and assume that $ \left(\sqrt{Q}(\alpha_i^Q - \alpha_i)\right)_{Q\ge 1} $ is tight for $i=n+2,\dots,N$. Then, we write 
\begin{align*}
  \sqrt{Q}(\alpha_{n+1}^Q - \alpha_{n+1}) =  &\left( \frac 1 Q \sum_{q=1}^Q \Delta X^q_{n+1} (\Delta X^q_{n+1})^T \right)^{-1}  \left(\mathcal{T}^Q_1 +\mathcal{T}^Q_2\right) \\
  &   +\mathcal{T}^Q_3 \E\left[ \max_{n + 1\le j \le N} \left\{ Z_{j} - \sum_{i=n+2}^{j} \alpha_i \cdot \Delta X_i \right\}  \Delta X_{n+1}\right],
\end{align*}
with
{\footnotesize
\begin{align*}
  \mathcal{T}^Q_1&=\sqrt{Q}\left( \frac 1 Q \sum_{q=1}^Q \left( \max_{n + 1\le j \le N} \left\{ Z^q_{j} - \sum_{i=n+2}^{j} \alpha^Q_i \cdot \Delta X^q_i \right\}  -\max_{n + 1\le j \le N} \left\{ Z^q_{j} - \sum_{i=n+2}^{j} \alpha_i \cdot \Delta X^q_i \right\} \right)\Delta X^q_{n+1}    \right), \\
  \mathcal{T}^Q_2&=\sqrt{Q}\left( \frac 1 Q \sum_{q=1}^Q \max_{n + 1\le j \le N} \left\{ Z^q_{j} - \sum_{i=n+2}^{j} \alpha_i \cdot \Delta X^q_i \right\} \Delta X^q_{n+1}   - \E\left[ \max_{n + 1\le j \le N} \left\{ Z_{j} - \sum_{i=n+2}^{j} \alpha_i \cdot \Delta X_i \right\}  \Delta X_{n+1}\right]  \right), \\
  \mathcal{T}^Q_3&=\sqrt{Q} \left( \left(\frac 1 Q \sum_{q=1}^Q \Delta X^q_{n+1} (\Delta X^q_{n+1})^T \right)^{-1} -\E[\Delta X_{n+1} (\Delta X_{n+1})^T]^{-1}\right).
\end{align*}
}

As $Q\to \infty$, the terms $\mathcal{T}^Q_2$ and $\mathcal{T}^Q_3$ converge in distribution to a multivariate normal distribution by the CLT. For the first term, we  use the same arguments as in~\eqref{bound_norm_max}  and get
\begin{align*}
  & \left\|    \mathcal{T}^Q_1 \right\| \le \max_{n+2\le i \le N} \|\sqrt{Q}(\alpha^Q_i-\alpha_i) \| \frac 1 Q \sum_{q=1}^Q  \sum_{i=n+2}^{N} \| \Delta X^q_i\| \| \Delta X^q_{n+1} \|. 
\end{align*}
By the SLLN, $\frac 1 Q \sum_{q=1}^Q  \sum_{i=n+2}^{N} \| \Delta X^q_i\| \| \Delta X^q_{n+1} \|$ converge almost surely and $(\mathcal{T}^Q_1)_{Q\ge 1}$ is tight by using the induction hypothesis. Therefore, $(\mathcal{T}^Q_1+\mathcal{T}^Q_2)_{Q\ge 1}$  is also tight. Since  $\left( \frac 1 Q \sum_{q=1}^Q \Delta X^q_{n+1} (\Delta X^q_{n+1})^T \right)^{-1}$ converges almost surely to $\left(\E[(\Delta X_{n+1} (\Delta X_{n+1})^T)]\right)^{-1}$, we get that the sequence $\left(\sqrt{Q}(\alpha_{n+1}^Q - \alpha_{n+1})\right)_{Q \ge 1}$ is  tight.
\end{proof}

\begin{corollary}\label{cor_price}
  We make the same assumption as in Proposition~\ref{prop:cv_alpha}
  Let $U_0^Q= \frac 1 Q \sum_{q=1}^Q \max_{0\le j \le N} \left\{ Z^q_{j} - \sum_{i=1}^j \alpha^Q_i \cdot \Delta X^q_i \right\}$ be the estimated Monte-Carlo dual price. Then, 
  $$  U_0^Q \underset{Q\to \infty} \to \E \left[ \max_{0\le j \le N} \left\{ Z_{j} - \sum_{i=1}^j \alpha_i \cdot \Delta X_i \right\} \right],$$
  and $\left(\sqrt{Q}\left( U_0^Q - \E \left[ \max_{0\le j \le N} \left\{ Z_{j} - \sum_{i=1}^j \alpha_i \cdot \Delta X_i \right\} \right] \right)\right)_{Q\ge 1}$ is a tight sequence. 
\end{corollary}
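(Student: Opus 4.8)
The plan is to compare $U_0^Q$ with the ``oracle'' estimator that uses the exact coefficients $\alpha_i$ in place of their Monte Carlo counterparts $\alpha_i^Q$, and then to treat the resulting difference as a perturbation controlled by Proposition~\ref{prop:cv_alpha}. Write $\mu = \E[\max_{0\le j \le N}\{Z_j - \sum_{i=1}^j \alpha_i \cdot \Delta X_i\}]$ for the target and introduce the oracle average $V^Q = \frac 1 Q \sum_{q=1}^Q \max_{0\le j \le N}\{Z_j^q - \sum_{i=1}^j \alpha_i \cdot \Delta X_i^q\}$. Since the $\alpha_i$ are deterministic and $Z_j,\Delta X_i\in L^2$, the i.i.d.\ summands defining $V^Q$ are square integrable (being dominated by $\sum_{j=0}^N |Z_j| + \sum_{i=1}^N \norm{\alpha_i}\,\norm{\Delta X_i}$), so the strong law of large numbers gives $V^Q \to \mu$ almost surely and the central limit theorem gives that $\sqrt{Q}(V^Q - \mu)$ converges in distribution, hence is tight.

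First I would bound the perturbation $U_0^Q - V^Q$. Using $|\max_j a_j - \max_j b_j| \le \max_j |a_j - b_j|$ exactly as in~\eqref{bound_norm_max}, one obtains
$$|U_0^Q - V^Q| \le \frac 1 Q \sum_{q=1}^Q \max_{0\le j \le N}\left|\sum_{i=1}^j (\alpha_i^Q - \alpha_i)\cdot \Delta X_i^q\right| \le \max_{1\le i \le N}\norm{\alpha_i^Q - \alpha_i}\; \frac 1 Q \sum_{q=1}^Q \sum_{i=1}^N \norm{\Delta X_i^q}.$$
By the strong law of large numbers the average $\frac 1 Q \sum_{q=1}^Q \sum_{i=1}^N \norm{\Delta X_i^q}$ converges almost surely to the finite constant $\sum_{i=1}^N \E[\norm{\Delta X_i}]$, while $\max_{1\le i \le N}\norm{\alpha_i^Q - \alpha_i}\to 0$ almost surely by Proposition~\ref{prop:cv_alpha}. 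Hence $U_0^Q - V^Q \to 0$ almost surely, and combined with $V^Q \to \mu$ this yields the claimed almost sure convergence $U_0^Q \to \mu$.

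For the tightness I would decompose $\sqrt{Q}(U_0^Q - \mu) = \sqrt{Q}(U_0^Q - V^Q) + \sqrt{Q}(V^Q - \mu)$. The second term is tight by the central limit theorem as above. For the first, multiplying the previous bound by $\sqrt{Q}$ gives
$$\sqrt{Q}\,|U_0^Q - V^Q| \le \max_{1\le i \le N}\norm{\sqrt{Q}(\alpha_i^Q - \alpha_i)}\; \frac 1 Q \sum_{q=1}^Q \sum_{i=1}^N \norm{\Delta X_i^q}.$$
Here the average factor converges almost surely to a constant and each $\sqrt{Q}(\alpha_i^Q - \alpha_i)$ is tight by Proposition~\ref{prop:cv_alpha}, so the maximum over the finitely many indices $i$ is tight as well; the product of a tight sequence with an almost surely convergent one is tight (Slutsky). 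Thus $\sqrt{Q}(U_0^Q - V^Q)$ is tight, and so is the sum of the two tight sequences.

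The only point requiring a little care, and the main (mild) obstacle, is the integrability needed to apply the central limit theorem to $V^Q$: one must check that $\max_{0\le j \le N}\{Z_j - \sum_{i=1}^j \alpha_i \cdot \Delta X_i\}$ has finite variance. This is immediate from the crude domination by $\sum_{j=0}^N |Z_j| + \sum_{i=1}^N \norm{\alpha_i}\,\norm{\Delta X_i}$ together with the standing $L^2$ assumptions on $Z$ and $\Delta X$; in particular no fourth-moment hypothesis is needed at this stage, unlike for the tightness of the $\alpha_n^Q$. Everything else simply reuses the estimates already established in the proof of Proposition~\ref{prop:cv_alpha}.
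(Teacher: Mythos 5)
Your proposal is correct and follows essentially the same route as the paper: decompose $U_0^Q$ into the oracle average with the exact $\alpha_i$ (handled by the SLLN and CLT) plus a perturbation $\Delta^Q$ bounded via the Lipschitz property of the max by $\norm{\alpha_i^Q-\alpha_i}$ times an empirical average of $\norm{\Delta X_i^q}$, then invoke Proposition~\ref{prop:cv_alpha} for the almost sure convergence and the tightness of $\sqrt{Q}\,\Delta^Q$. Your extra remark that only second moments are needed for the CLT applied to the oracle term (the fourth moments entering only through the tightness of $\sqrt{Q}(\alpha_i^Q-\alpha_i)$) is a correct refinement the paper does not spell out.
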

\begin{proof}
  We have 
  $$ U_0^Q= \frac 1 Q \sum_{q=1}^Q \max_{0\le j \le N} \left\{ Z^q_{j} - \sum_{i=1}^j \alpha_i \cdot \Delta X^q_i \right\} + \Delta^Q,  $$
  with $\Delta^Q=\frac 1 Q \sum_{q=1}^Q  \max_{0 \le j \le N} \left\{ Z^q_{j} - \sum_{i=1}^j \alpha^Q_i \cdot \Delta X^q_i \right\} - \max_{0 \le j \le N} \left\{ Z^q_{j} - \sum_{i=1}^j \alpha_i \cdot \Delta X^q_i \right\}$. We have $|\Delta^Q|\le \sum_{i=1}^N |\alpha_i^Q-\alpha_i| \frac 1Q \sum_{q=1}^Q |\Delta X_i^q| $. From Proposition~\ref{prop:cv_alpha}, We get that $\Delta^Q \to 0$ a.s. and that $(\sqrt{Q} \Delta^Q)_{Q \ge 1}$ is tight since $\frac 1Q \sum_{q=1}^Q |\Delta X_i^q| $ converges almost surely by the LLN. The CLT gives the tightness of $\sqrt{Q}\left( \frac 1 Q \sum_{q=1}^Q \max_{0\le j \le N} \left\{ Z^q_{j} - \sum_{i=1}^j \alpha_i \cdot \Delta X^q_i \right\} -\E \left[ \max_{0\le j \le N} \left\{ Z_{j} - \sum_{i=1}^j \alpha_i \cdot \Delta X_i \right\} \right] \right)$ and thus the claim. 
\end{proof}

\section{Implementation in a financial framework}\label{Sec_Implementation}

We present the application framework and consider a market with $d$ assets $(S^k_t,t\ge 0)$, $k\in\{1,\dots, d\}$. We denote by $(\mathcal{G}_t,t\ge 0)$ the usual filtration generated by these assets. We consider for simplicity a deterministic interest rate~$r$, and assume that the discounted assets $(\tilde{S}^k_t,t\ge 0)$ with $\tilde{S}_t^k=e^{-rt}S_t^k$ are square integrable $\mathcal{G}_t$-martingales. 
We consider a time horizon~$T>0$ and a Bermudan option with regular exercising dates
$$T_i=\frac{iT}{N}, \ i=0,\dots,N.$$ We take $\cf_i=\mathcal{G}_{T_i}$ to fit with the framework of the previous sections. 
Since perfect hedging is only attainable in continuous time when a martingale representation theorem holds, we suggest to use a sub-grid to parametrise the martingale increments. Each interval $[T_i, T_{i+1}]$ for $0 \le i \le  N-1 $ is split into $\bar{N}$ regular sub-intervals, and we set 
\begin{equation}\label{def_subticks} t_{i,j}=T_i+\frac{j}{\bar{N}} \frac{T}{N}, \text{ for }0 \le j \le \bar{N}.\end{equation}

We consider a family of functions $u_p:\R^d \to \R$ for $p\in \{1,\dots,\bar{P}\}$ and a family of discounted assets $(\ca^k)_{1\le k\le \bar{d}}$. When the hedge is only made with the underlying assets, we take $\bar{d}=d$ and  $\ca^k=\tilde{S}^k$ but we may also include European option claims in the hedge and take $\bar{d}>d$ in this case. Then, we define the  following elementary martingale increments:
\begin{equation}\label{def_mg_incr}
  X^{{p},k}_{t_{i,j}} - X^{{p},k}_{t_{i,j-1}} =  u^{p}_{i,j-1} (S_{t_{i,j-1}}) (\ca^k_{t_{i,j}}- \ca^k_{t_{i,j-1}}),
\end{equation}
for $1\le p\le \bar{P}$ and $1\le k\le \bar{d}$. Thus,  $L=\bar{N} \times \bar{P} \times \bar{d}$  is the number of martingale increments between two exercising dates that span $\ch_i^{pr}$. It is also the dimension of $\ch_i^{pr}$ if this is a free family. In this paper, we consider for $u^p_{i,j}$ either polynomial functions or indicator functions (local basis). This is further described in Appendix~\ref{App_basis}.
Finally, we decompose the martingale increments $\Delta M_{i+1}$, $0\le i \le N-1$ as follows
\begin{align}\label{def_DeltaMi}
  \Delta M_{i+1} =  \sum_{j=1}^{\bar{N}}\sum_{{{ p},k}} \alpha_{i,j}^{{ p},k}  (X^{{ p},k}_{t_{i, j}} - X^{{ p},k}_{t_{i, j-1}}),
\end{align}
and note $\alpha_i$ the vector $(\alpha_{i,j}^{{ p},k})_{j,p,k} \in \R^L$.

At stage $i$, we optimize over all the coefficients $\alpha^{{ p},k}_{i,j}$ for $1 \le j \le \bar{N}$, ${ p}\in \{1,\dots,\bar{P}\}$ and $k\in\{1,\dots, \bar{d}\}$. 
Namely, we specify~\eqref{def_alpha_n} in this framework. We first focus on the use of subticks. By using the martingale property of $X^{{ p},k}$, we get for $i$ and $j$ given, 
\begin{align} \label{linear_system_alpha}\sum_{{ p}',k'}  &\alpha_{i,j}^{{ p}',k'} \E[(X^{{ p},k}_{t_{i,j}}-X^{{ p},k}_{t_{i,j-1}})(X^{{ p}',k'}_{t_{i,j}}-X^{{ p}',k'}_{t_{i,j-1}})]\\=&\E\left[(X^{{ p},k}_{t_{i,j}}-X^{{ p},k}_{t_{i,j-1}}) \left(\max_{i + 1\le m \le N} \left\{ Z_{m} - \sum_{\ell=i+2}^{m} \alpha_\ell \cdot \Delta X_\ell \right\}  \right)\right], \notag
\end{align}
for all $p$ and $k$. 
Thus, the use of sub-intervals has a linear computational cost: instead of solving a linear system of size $L=\bar{N}\times \bar{P}\times \bar{d}$, we solve $\bar{N}$ linear systems of size $\bar{P}\times \bar{d}$. Besides, the linear systems can be solved independently for the different values of $1 \le j \le \bar{N}$. In fact, the use of subticks makes the problem similar to a European option that pays $\max_{i + 1\le m \le N} \left\{ Z_{m} - \sum_{\ell=i+2}^{m} \alpha_\ell \cdot \Delta X_\ell \right\}$ at time $t_{i,\bar{N}}=T_{i+1}$, and we are computing the mean-variance hedging of this option on $[T_i,T_{i+1}]$ on the time grid $t_{i,j}$. The decomposition in $\bar{N}$ independent minimisation problems is pointed out in Remark~\ref{Rk_Europ} 
{
\begin{remark}[Computational complexity] In practice, we consider a Monte-Carlo estimator of~\eqref{linear_system_alpha} with $Q$ samples and we have to solve $N\times \bar{N}$ linear systems of dimension $\bar{P}\times \bar{d}$. Thus, the overall complexity is of order $O(Q N \bar{N}(\bar{P}\bar{d})^3 )$, which can even be $O(Q N \bar{N}\bar{P}\bar{d}^3 )$ when choosing a local regression basis.  In contrast, the global minimisation proposed in~\cite{DFM} requires to solve a Linear Program with $Q+N\times \bar{N}\times \bar{P} \times \bar{d}$ variables and $Q\times (N+1)$ constraints and is therefore far most computationally demanding, see e.g.~\cite{CLZ} for recent results on the LP complexity. 
\end{remark}
}

\section{Numerical results}\label{Sec_Num}

In all our numerical experiments, we consider a Black Scholes model, defined on a risk neutral probability space $(\Omega,\overline{\cf},\P)$  by
\begin{equation}\label{eq:BS_model}
  dS^k_t = S^k_t \left((r - \delta^k ) dt + \sigma^k dW^k_t \right),
\end{equation}
for $k=1,\dots,d$ where for every $k$, $\sigma^k$ and $\delta^k$ are respectively the volatility and the dividend rate of asset~$k$, $r$ is the constant interest rate and $W$ is a $d$-dimensional Brownian motion such that $d\langle W^k, W^{k'} \rangle_t = \rho \ind{k \ne k'} dt + \ind{k = k'} dt$, with $\rho \in \left[-\frac{1}{d-1},1\right]$. We set $\tilde{S}^k_t=e^{(\delta^k-r)t}S^k_t$, which is a martingale representing the discounted value of asset~$k$ when $\delta^k=0$. More generally when $\delta_k \in \R$, this is the discounted value of a self financing portfolio that only invests in asset~$k$ (dividends are instantly reinvested in asset~$k$) with initial value $S_0^k$. 

In all cases, we are interested in Bermudan option prices with $N$ regular exercising dates on $[0,T]$ and discounted payoff $Z_n=e^{-r\frac{nT}N}\Psi(S_{\frac{nT}N})$ at time $n \in \{0,\dots, N\}$. We use Corollary~\ref{cor_price} to approximate the dual price by
$$ U_0^Q= \frac 1 Q \sum_{q=1}^Q \max_{0\le j \le N} \left\{ Z^q_{j} - \sum_{i=1}^j \alpha^Q_i \cdot \Delta X^q_i \right\}.$$ 
Note that the optimisation procedure leading to the computation of the coefficients $\alpha^Q$ and the Monte Carlo estimators use the same samples. Because of overfitting, $U_0^Q$ can significantly underestimate $\E \left[ \max_{0\le j \le N} \left\{ Z_{j} - \sum_{i=1}^j \alpha_i \cdot \Delta X_i \right\} \right]$ when $Q$ is not sufficiently large, compared to the number of parameters to estimate. Thus, it can be wise to compute a second approximation of the dual price using new simulations:
\begin{equation}
  \hat{U}_0^Q= \frac 1 Q \sum_{q=1}^Q \max_{0\le j \le N} \left\{ \hat{Z}^q_{j} - \sum_{i=1}^j \alpha^Q_i \cdot \Delta \hat{X}^q_i \right\},
\end{equation} 
where $(\hat{Z}^q,\Delta \hat{X}^q )_{1\le q \le Q}$ is independent from the sample $(Z^q,\Delta X^q )_{1\le q \le Q}$ used to compute $\alpha^Q$. Note that $\hat{U}_0^Q$ has a nonnegative bias with respect to~$U_0$ since it is the Monte Carlo estimator of $\E\left[ \max_{0 \le j \le N} \{Z_{j} - M_{j}\}\right]$ for the martingale $M$ associated with $\alpha^Q$. In practice, when the difference between $U_0^Q$ and $\hat{U}_0^Q$ is relatively small, this means that there nos  overfitting, and that the value of $Q$ is appropriate for the number of parameters to estimate. 

Let us recall here that $\hat{U}_0^Q$ is the true initial value of the hedging strategy computed by our algorithm. Contrary to primal methods, the prices that we are reporting in the tables below correspond to effective discrete time hedging strategies. Besides, we show on the different examples the P\&L given by the hedging strategy thus obtained. We consider the point of view of an option seller who hedges against the best exercising policy~$\hat{\tau}^\s$, which we compute with the Longstaff Schwartz algorithm. We draw the empirical distribution of $\left(\hat{U}_0^Q+ \sum_{i=1}^{\hat{\tau}^\s} \alpha_i^Q \cdot \Delta \hat{X}^q_i -\hat{Z}^q_{\hat{\tau}^\s}\right)_{1\le q \le Q}$ on different examples. The more peaked is the distribution, the better is the hedge. Thus, we can analyse empirically the interest of including European options in the hedging portfolio or of increasing the rebalancing frequency ($\bar{N}$).

\subsection{One dimensional options}\label{subsec_num1d}

We start our discussion by considering a Bermudan put option with payoff function
$$ \Psi(S) = (K - S)_+.$$
We consider two types of hedging portfolio: one using only the underlying asset ($\ca^1_t=\tilde{S}_t$ in~\eqref{def_mg_incr}) and a second one using both the underlying asset and the European put option with the same strike ($\ca^1_t=\tilde{S}_t$ and $\ca^2_t=e^{-rT }\E[\Psi(S_T)|\cf_t]$ in~\eqref{def_mg_incr}). The use of the option is indicated by a boolean in Table~\ref{Table:Put_local} in the column ``Vanilla''. Concerning the regression basis, we present here the results obtained by the local basis presented in Appendix~\ref{App_basis}. The results obtained with a polynomial basis are quite similar and we have decided not to include them. The use of polynomial functions is more relevant in higher dimension to reduce the size of the projection basis.

\begin{table}[htbp!]
  \centering\begin{tabular}{cccccc}
 \hline
    $Q$ & $\bar{N}$ & $P$ & Vanilla\phantom{$\Big|$} &  $U_0^Q$ & $\hat{U}_0^Q$ \\
    \hline
    50000 & 1 & 1 & True & 9.91 & 9.91 \\
    100000 & 1 & 50 & True & 9.89 & 9.91 \\
    100000 & 1 & 50 & False & 10.32 & 10.33 \\
    100000 & 5 & 50 & False & 9.99 & 10.08 \\
    100000 & 10 & 100 & False & 9.82 & 10.19 \\
    500000 & 10 & 100 & False & 9.95 & 10.02 \\
    2000000 & 10 & 50 & False & 9.98 & 9.98 \\
    2000000 & 20 & 50 & False & 9.94 & 9.96 \\
    \hline
    \end{tabular}
  \caption{Prices for a put option using a basis of $P$ local functions with $K = S_0 = 100$, $T = 0.5$, $r=0.06$, $\sigma=0.4$ and $N=10$ exercising dates. LS price with a polynomial approximation of order 6: $9.90$. \label{Table:Put_local}}
\end{table}

\begin{figure}[h!]
  \begin{subfigure}{0.33\textwidth}
    \includegraphics[width=\textwidth]{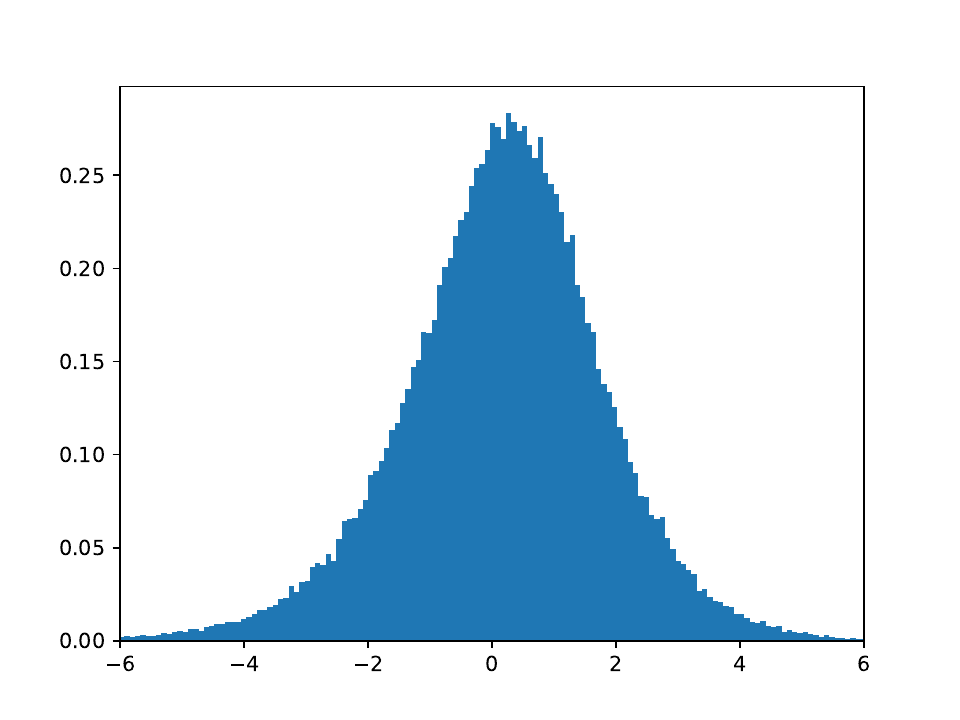}\caption{$\bar{N}=5$, $P=50$, $Q=10^5$.}\label{fig:Put_local_a}
  \end{subfigure}
  \begin{subfigure}{0.33\textwidth}
    \includegraphics[width=\textwidth]{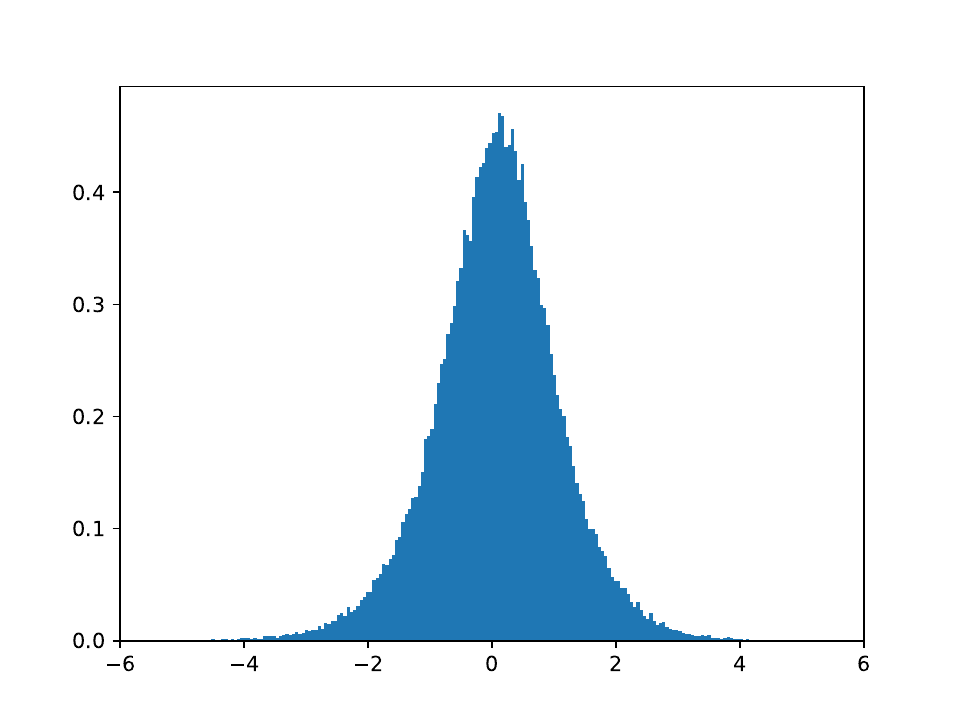}\caption{$\bar{N}=10$, $P=50$, $Q=2\times 10^6$.}\label{fig:Put_local_b}
  \end{subfigure}
   \begin{subfigure}{0.33\textwidth}
    \includegraphics[width=\textwidth]{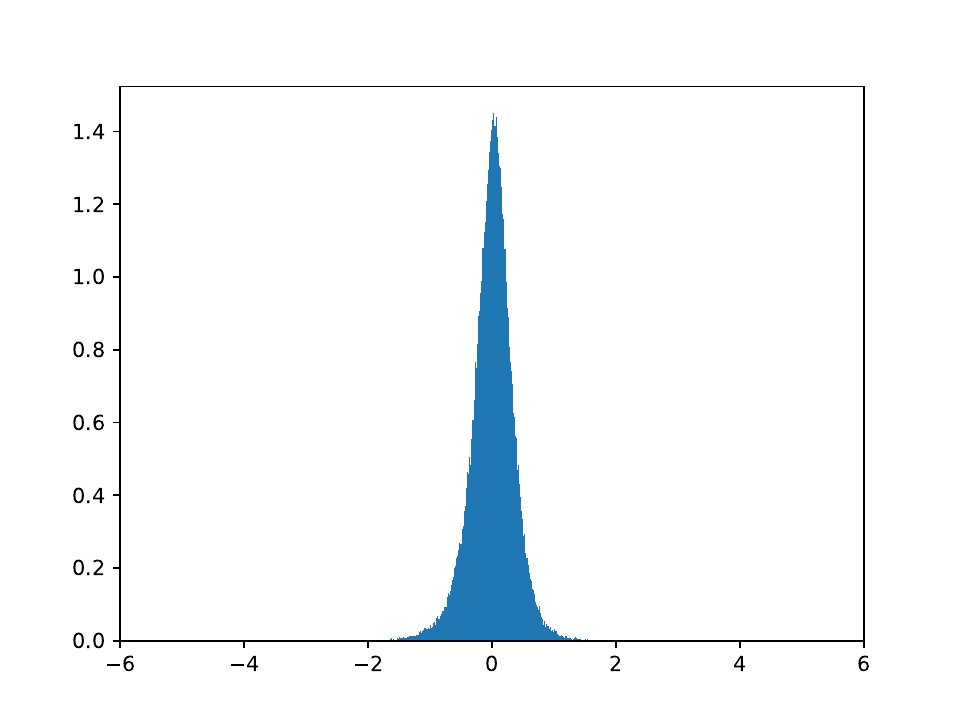}\caption{$\bar{N}=1$, $P=50$, $Q=10^5$.}\label{fig:Put_local_c}
\end{subfigure} \\
\begin{center}\begin{subfigure}{0.33\textwidth}
  \includegraphics[width=\textwidth]{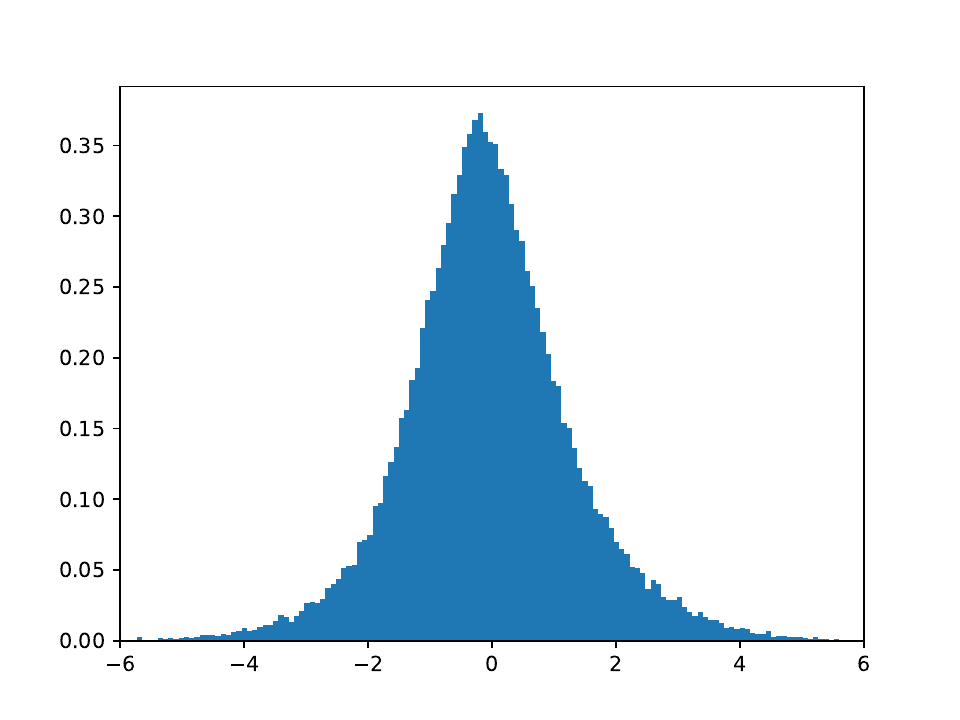}\caption{$\bar{N}=5$, $P=50$, $Q=10^5$, delta hedging (CRR).}\label{fig:Put_local_CRR}
\end{subfigure}
\begin{subfigure}{0.33\textwidth}
  \includegraphics[width=\textwidth]{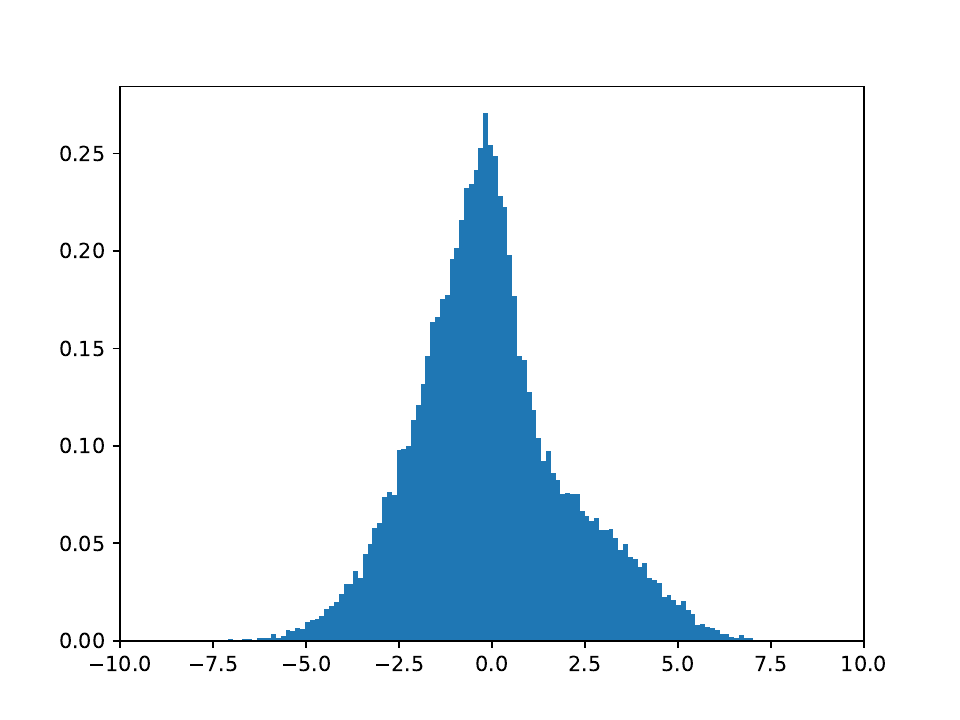}\caption{$\bar{N}=5$, $P=50$, $Q=10^5$, delta hedging (Wang Caflisch).}\label{fig:Put_local_WC}
\end{subfigure}
\end{center}
  \caption{P\&L histograms of the hedging strategy for the Bermudan Put option of Table~\ref{Table:Put_local} for the stock only strategy ((\ref{fig:Put_local_a}) and (\ref{fig:Put_local_b}))  and the strategy using extra European options (\ref{fig:Put_local_c}). {Figures~(\ref{fig:Put_local_CRR}) and~(\ref{fig:Put_local_WC}) show the P\&L with the delta hedging strategy computed explicitly or by Monte-Carlo in the same framework as Figure~(\ref{fig:Put_local_a}).}} \label{fig:Put_local}
\end{figure}

First, we comment the results with the stock only portfolio. As we may expect, the rebalancing frequency plays a crucial role. The higher is the frequency, the closer is the estimated dual price to the primal price given by the Longstaff and Schwartz algorithm. 
We observe on this example the effect of overfitting: for $\bar{N}=10$ and $P=100$, the total number of parameters to estimate is $N\times\bar{N}\times P=10000$, and $Q=100000$ is not sufficient for a suitable estimation, as shown by the difference between $U_0^Q$ and $\hat{U}_0^Q$. Increasing $Q$ to $500000$ reduces this gap. With $\bar{N}=20$, $P=50$ and $Q=2\times 10^6$, we get a price which is $0.06$ above the Longstaff-Schwartz price, which is satisfactory: the same error magnitude appears also when hedging European options with discrete time strategies. 

We now discuss the P\&L obtained in Figure~\ref{fig:Put_local}. We observe the importance of the rebalancing frequency for a stock only portfolio, and get a variance of 2.73 when $\bar{N}=5$ (Figure~\ref{fig:Put_local_a}) and $1.05$ when $\bar{N}=10$ (Figure~\ref{fig:Put_local_b}). {We have also compared the hedging strategy given by our algorithm with the classical delta hedging strategy. In Figure~\ref{fig:Put_local_CRR}, we show the P\&L distribution obtained with the delta hedging strategy when the delta is computed explicitly with the Cox-Ross-Rubinstein (CRR) approximation of the Black-Scholes model (see~\cite{Hull} and~\cite{PV}). Compared to Figure~\ref{fig:Put_local_a}, the distributions are very similar with a slightly sharper hedge for the explicit delta hedging. However, the delta cannot be computed explicitly in general and one has to use approximations method such as the Monte Carlo method proposed by~\cite{WC}. In Figure~\ref{fig:Put_local_WC}, we have plotted the P\&L of the delta hedging obtained with this method when the delta strategy is computed
once and for all at the initial time, with a polynomial regression of order 6. The distribution is again quite similar to the one of Figure~\ref{fig:Put_local_a}, but it is slightly less peaked around zero. These comparisons show that our algorithm produces a sound hedge with respect to classical methods.}

When adding the European put option in the hedging portfolio, our algorithm is very close to the primal price with already $\bar{N}=1$ and $P=1$, as expected from the conclusions of~\cite{rogers-02}. To be precise, let us consider the following martingale increments between two exercising dates
$$\Delta \ca^2_i=e^{-rT_i}P_{BS}(T_i,S_{T_i})- e^{-rT_{i-1}}P_{BS}(T_{i-1},S_{T_{i-1}}),$$ 
where $P_{BS}(t,x)$ denotes the Black-Scholes price of a European put option (with maturity $T$ and strike $K$) at time $t\in [0,T]$ and spot value $S_t=x$.  
Instead of minimizing $\E[\max_{0\le j\le N}\left(Z_j-\sum_{i=1}^j \alpha_i \Delta \ca^2_i \right)]$, it is suggested in~\cite{rogers-02} to simply minimize numerically
\begin{equation}\label{Rogers_algo}\E\left[\max_{0\le j\le N}\left(Z_j- \alpha \sum_{i=1}^j \Delta \ca^2_i \right)\right],
\end{equation}
i.e. the coefficient $\alpha$ is constant along the time. Using this algorithm with $Q=10^6$, we obtain that the price of the Bermudan option considered in Table~\ref{Table:Put_local} is 9.96, which is already very close to the Longstaff-Schwartz price.

On some other examples, it may be less obvious to find a European option that gives a good proxy of the Martingale $M^\s$, which is crucial in the  algorithm developed by~\cite{rogers-02}. To illustrate this, we consider a butterfly option with payoff function $$\Psi(S)=2\left(\frac{K_1+K_2}2-S\right)_+ - \left(K_1-S\right)_+ - \left(K_2-S\right)_+.$$
A natural choice is to take the European option with the same payoff function as the Bermudan, which lead us to consider:
$$  \ca^2= e^{-r T_i}BTF_{BS}(T_i,S_{T_i}) -e^{-r T_{i-1}}BTF_{BS}(T_{i-1},S_{T_{i-1}}), $$ 
where $BTF_{BS}(t,x)$ denotes the Black-Scholes price of a European butterfly option with payoff $\Psi$ at time~$t$. Then, applying the same Monte-Carlo algorithm to minimize~\eqref{Rogers_algo} with $Q=10^6$, $S_0=95$, $K_1=90$, $K_2=110$, $T=0.5$, $r=0.06$, $\sigma=0.4$ and $N=10$, we get a price equal to 6.49 while the Longstaff Schwartz price is equal to 5.65. 
 Using the European put option with strike~$\frac{K_1+K_2}2$ instead of the European butterfly option in this algorithm leads to a dual price equal to 7.04, which is worse. This shows that Rogers' algorithm, which is very efficient from a numerical point of view, requires to have at hand a martingale that is already quite a good approximation of $M^\s$. When such martingale is not available, one has for the dual approach to optimize over a larger family of martingales as done for example in~\cite{BH,DFM,Lelong} and as we propose in this paper. 

\begin{table}[htbp!]
  \centering\begin{tabular}{cccccc}
    \hline
    $Q$ & $\bar{N}$ & $P$ & Vanilla\phantom{$\Big|$} & $U_0^Q$ & $\hat{U}_0^Q$ \\
    \hline
    50000 & 1 & 50 & False & 6.54 & 6.54 \\
    50000 & 1 & 50 & True & 6.25 & 6.28 \\
    100000 & 5 & 50 & False & 6.11 & 6.12 \\
    100000 & 5 & 50 & True & 5.91 & 5.95 \\
    100000 & 10 & 50 & False & 5.97 & 6.00 \\
    100000 & 10 & 50 & True & 5.79 & 5.87 \\
    500000 & 10 & 100 & False & 5.97 & 5.98 \\
    500000 & 10 & 100 & True & 5.79 & 5.82 \\
    500000 & 20 & 50 & False & 5.86 & 5.87 \\
    500000 & 20 & 50 & True & 5.71 & 5.74 \\
    \hline
    \end{tabular}
    \caption{Prices for a butterfly option with parameters using a basis of~$P$ local functions. The Longstaff-Schwartz algorithm with order $5$ polynomials gives a price of $5.65$. \label{tab:butterfly_loc}}
\end{table}

For the same Bermudan Butterfly option, Table~\ref{tab:butterfly_loc} shows the result  of our algorithm when using a hedging portfolio using only the asset ($\ca^1_t=\tilde{S}_t$) or using the asset and a European call option ($\ca^1_t=\tilde{S}_t$ and $\ca^2_t=e^{-rT }\E[(S_T-\frac{K_1+K_2}2)_+ |\cf_t]$ in~\eqref{def_mg_incr}). We see again a clear advantage to include the European option in the hedging strategy and to increase the rebalancing frequency. Thus, our algorithm does not only give a hedging strategy, it can also be used by practitioners to find a good trade-off between the price and the cost of extra monitoring. Besides, we show in  Figure~\ref{fig:butterfly_loc}  that including European options does not only improve the selling price but gives a lower variance of the P\&L distribution.  

\begin{figure}[h]
  \begin{center}
    \includegraphics[width=0.4\textwidth]{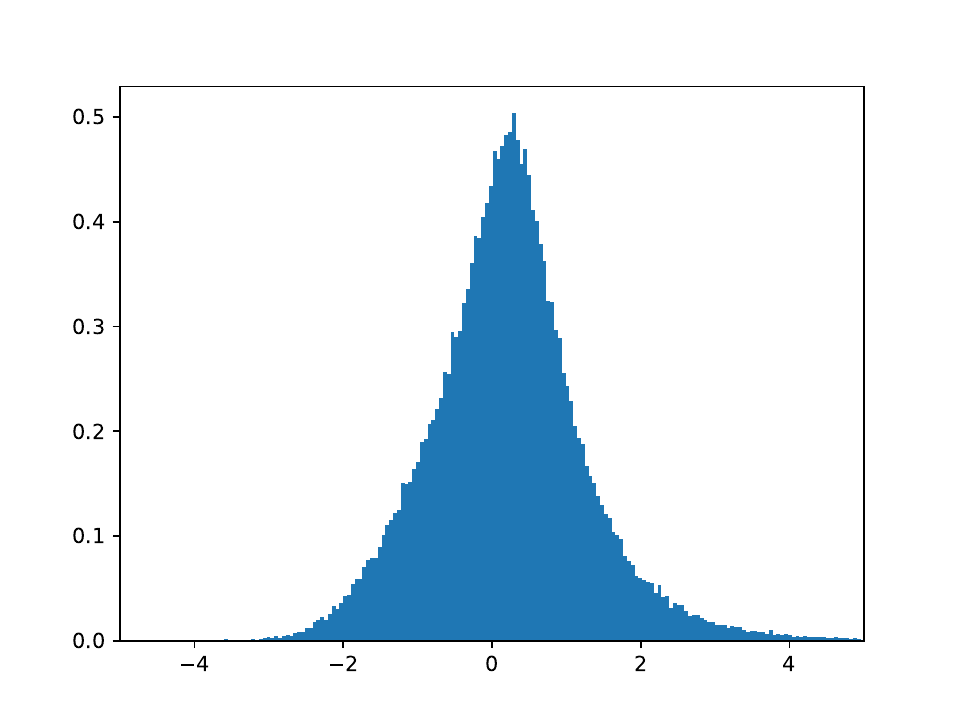}
    \includegraphics[width=0.4\textwidth]{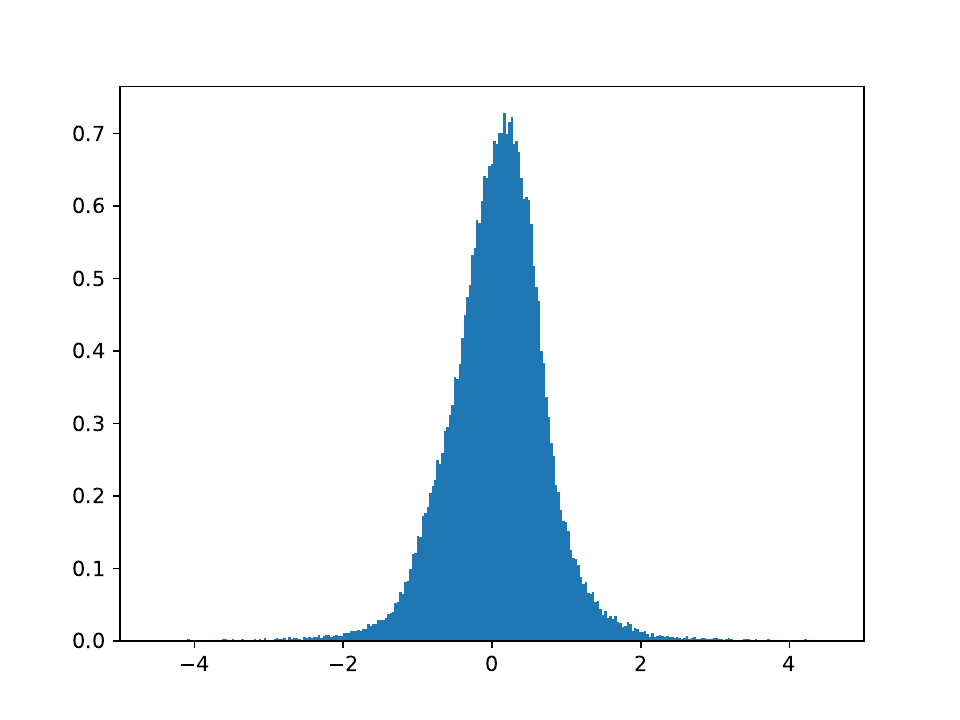}
  \end{center}
  \caption{P\&L histograms of the hedging strategy for the Bermudan Butterfly option of Table~\ref{tab:butterfly_loc} obtained with $\bar{N}=20$, $P=50$, $Q=5\times 10^5$ for the stock only strategy (left) and the strategy using extra European options (right). } \label{fig:butterfly_loc}
\end{figure}

\subsection{Bermudan options on many assets}

Now, we turn to experiments on multidimensional options. Unlike the one dimensional case, we now present two different bases for the regression: the piecewise constant local basis and the basis of polynomial functions. In fact, the local basis suffers from the curse of dimensionality and alternative choices can be relevant. In all these experiments, we either consider a portfolio using only the underlying assets (i.e. $\bar{d}=d$ and $\ca^k_t=\tilde{S}^k_t$ for $k=1,\dots,d$ in~\eqref{def_mg_incr}) or a portfolio using  the underlying assets and their corresponding at-the-money  
European call options (i.e. $\bar{d}=2d$, $\ca^k_t=\tilde{S}^k_t$ and $\ca^{d+k}_t=e^{-rT }\E[(S^k_T-S^k_0)_+ |\cf_t]$ for $k=1,\dots,d$ in~\eqref{def_mg_incr}). This choice is indicated in the tables  by a boolean in the ``Vanilla'' column. Of course, other choices of European options are possible. It may exist more clever choices, depending on the  considered Bermudan option. Here, we adopt a systematic approach and leave the investigation of finding the more appropriate hedging instruments for future research.

\subsubsection{Max-call option}

We consider a call option on the maximum of a basket of assets with payoff:
\[ \Psi(S)= \left(K - \max_{1 \le k \le d} S^k\right)_+.\]
We have tested our algorithm on the example presented in~\cite[Table 2]{AB}. Results are reported in Table~\ref{tab:max-call_loc} when using the local regression basis and in Table~\ref{tab:max-call_pol} when using polynomial functions.

\begin{table}[h!]
  \centering\begin{tabular}{cccccc}
    \hline
    $Q$ & $\bar{N}$ & $P$ & Vanilla\phantom{$\Big|$} & $U_0^Q$ & $\hat{U}_0^Q$ \\
    \hline
    1000000 & 1 & 10 & False & 8.98 & 8.99 \\
    1000000 & 1 & 10 & True & 8.33 & 8.36 \\
    2000000 & 5 & 10 & False & 8.53 & 8.55 \\
    2000000 & 5 & 10 & True & 8.19 & 8.21 \\
    4000000 & 10 & 10 & False & 8.46 & 8.47 \\
    4000000 & 10 & 10 & True & 8.16 & 8.18 \\
    \hline
    \end{tabular}
    \caption{Prices for a call option on the maximum of 2 assets using a basis of $P\times P$ local functions  and parameters $S_0=(90, 90)$, $\sigma=(0.2,0.2)$, $\rho=0$, $\delta=(0.1,0.1)$, $T=3$, $r=0.05$, $K=100$, $N=9$. The Longstaff-Schwartz algorithm give a price of $8.1$.}\label{tab:max-call_loc}
\end{table}

\begin{table}[htb!]
  \centering\begin{tabular}{ccccc}
    \hline
    $Q$ & $\bar{N}$ & Vanilla\phantom{$\Big|$} & $U_0^Q$ & $\hat{U}_0^Q$ \\
    \hline
    1000000 & 1 & False & 9.07 & 9.07 \\
    1000000 & 1 & True & 8.32 & 8.33 \\
    1000000 & 5 & False & 8.68 & 8.69 \\
    1000000 & 5 & True & 8.16 & 8.17 \\
    2000000 & 10 & False & 8.61 & 8.62 \\
    2000000 & 10 & True & 8.14 & 8.15 \\
    \hline
    \end{tabular}
    \caption{Prices for a call option on the maximum of 2 assets using a polynomial approximation of degree~$5$ and parameters $S_0=(90, 90)$, $\sigma=(0.2,0.2)$, $\rho=0$, $\delta=(0.1,0.1)$, $T=3$, $r=0.05$, $K=100$, $N=9$. The Longstaff-Schwartz algorithm give a price of $8.1$.}\label{tab:max-call_pol}
\end{table}

We see on both tables the interest of using European options in the hedging portfolio and of using $\bar{N}>1$ rebalancing dates between two exercising dates. Between the local basis and the polynomial one, the latter one gives slightly better results: with $\bar{N}=10$, we get a price that is very close to the one given by Longstaff and Schwartz.  

We have indicated in Figure~\ref{fig:max-call_loc} the P\&L associated with the corresponding hedging strategies. Again, we note a clear interest to include European options in the hedging portfolio. The P\&L variance is reduced from 7.3 (left plot) to 1.5 (center plot). We note that the P\&L distribution of the stock only portfolio is slightly shifted to the right. This is expected since its mean value is exactly the difference between $\hat{U}_0^Q$ and the Longstaff Schwartz price, which is equal to $8.55-8.1=0.45$ on this example.  {We have also plotted, on the right of Figure~\ref{fig:max-call_loc}, the P\&L obtained with the Monte Carlo method proposed by~\cite{WC}, when the delta hedging strategy is calculated once and for all at the initial time. Even if the magnitudes of the hedging errors are of the same order, the left plot is more peaked than the right one, which indicates a better hedge provided by our algorithm. Besides this, we believe that one strength of our algorithm is its ability to add any available financial instruments in the hedging portfolio. Of course, Monte Carlo methods such as the one of~\cite{WC} give also approximations of the Gamma matrix, but it is not always obvious to translate this matrix in terms of financial instruments (European options on single assets or on many assets with different strikes). Our algorithm gets around this difficulty and directly computes the quantity of financial instruments for the hedging strategy.}

\begin{figure}[h!]
  \begin{center}
    \includegraphics[width=0.32\textwidth]{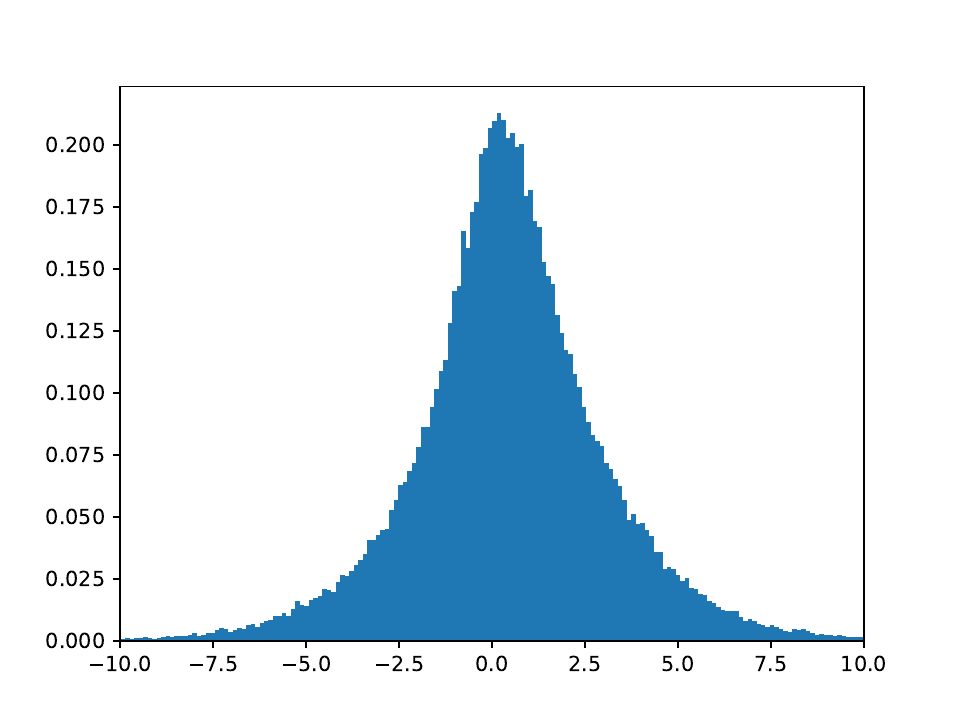}
    \includegraphics[width=0.32\textwidth]{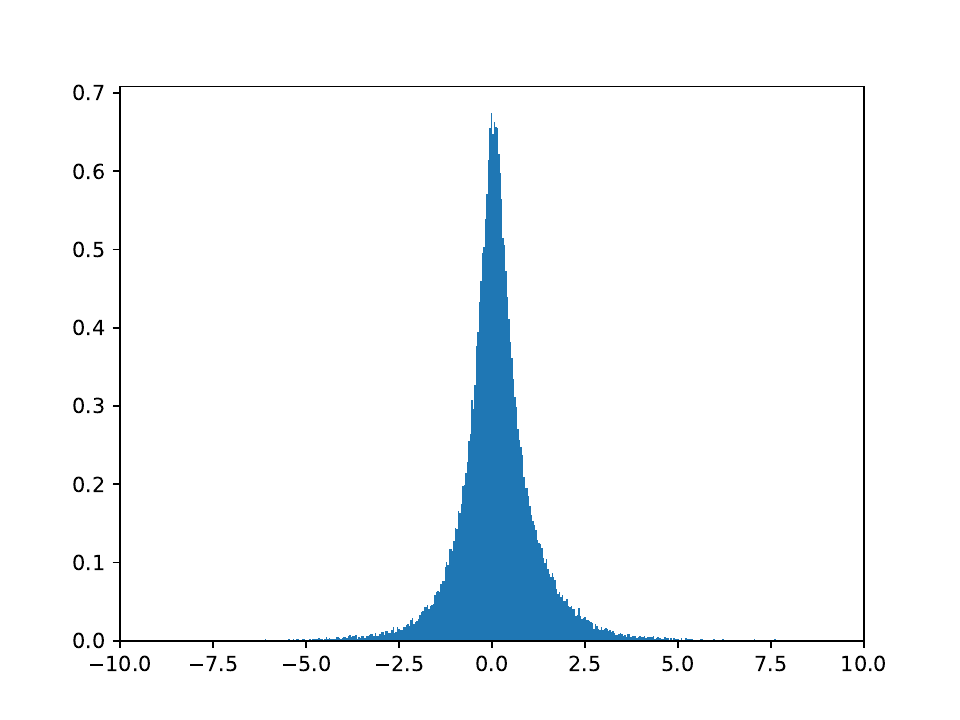} 
    \includegraphics[width=0.32\textwidth]{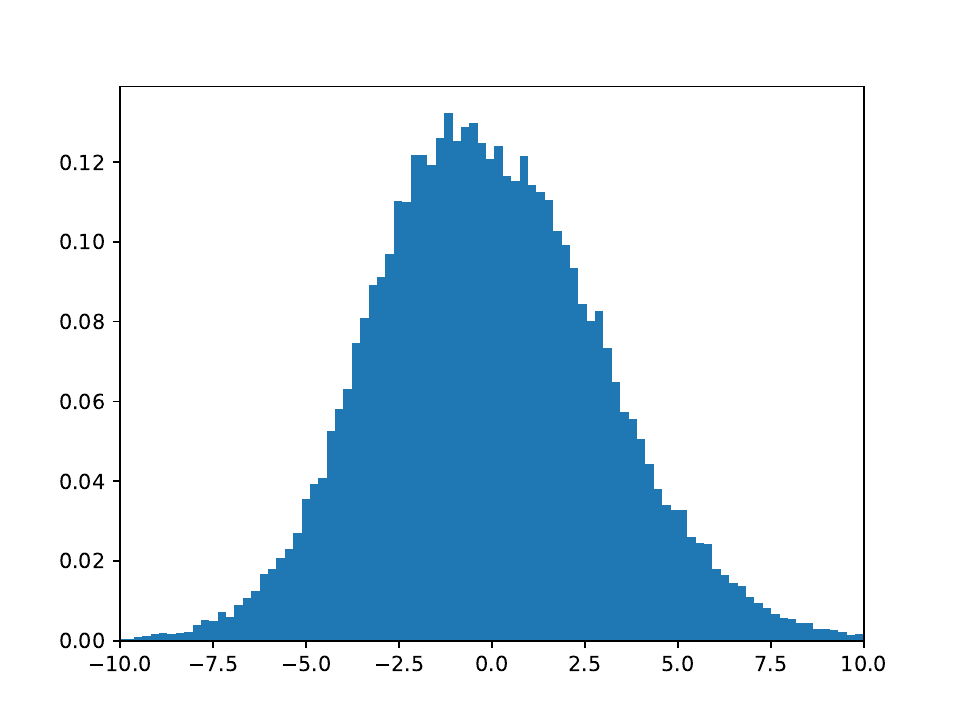}    
  \end{center}
  \caption{P\&L histograms of the hedging strategy for the Bermudan max-call option of Table~\ref{tab:max-call_loc} obtained with $\bar{N}=5$, $P=10$, $Q=2\times 10^6$ for the stock only strategy (left) and the strategy using extra European options (center). The right plot shows the result obtained with the delta hedging strategy calculated with~\cite{WC}.  }\label{fig:max-call_loc}
\end{figure}

\subsubsection{Min-put option}

We consider a put option on the minimum of two assets with payoff function $\Psi(S)=(K - \min(S^1_T, S^2_T))_+$. We take the same parameters as \cite[Table 4.5]{rogers-02}, but consider here a Bermudan option with 10 exercising dates instead of~50. Results are given in Table~\ref{tab:min-put_loc} for the local basis regression and in Table~\ref{tab:min-put_pol} for the polynomial basis. We obtain in Table~\ref{tab:min-put_loc} a price of 22.82, which is very close to 22.83 obtained in~\cite[Table 4.5]{rogers-02}.

Similarly to the max-call example, we observe that including the European option in the hedging portfolio leads to a significant improvement on its price.. Also, we observe that for the portfolio including European options, increasing the rebalancing frequency does not help much.
Contrary to our max-call example, the regression with the local basis gives here slightly better prices compared to the one obtained with polynomial functions.

\begin{table}[h!]
  \centering\begin{tabular}{cccccc}
    \hline
    $Q$ & $\bar{N}$ & $P$ & Vanilla\phantom{$\Big|$} & $U_0^Q$ & $\hat{U}_0^Q$ \\
    \hline
    1000000 & 1 & 10 & False & 23.53 & 23.54 \\
    1000000 & 1 & 10 & True & 22.83 & 22.86 \\
    2000000 & 5 & 10 & False & 23.06 & 23.10 \\
    2000000 & 5 & 10 & True & 22.77 & 22.87 \\
    4000000 & 10 & 10 & False & 22.98 & 23.03 \\
    4000000 & 10 & 10 & True & 22.75 & 22.82 \\
    \hline
  \end{tabular}
  \caption{Prices for a put on the minimum of two assets with a basis of $P^2$ local functions and parameters $S_0=(120,100)$, $\sigma=(0.4,0.8)$,  $\rho=0$, $\delta=0$, $T=0.5$, $r=0.06$, $K=100$, $N=10$. The Longstaff-Schwartz algorithm with a polynomial approximation of degree~$5$ yields a price of $22.6$.}\label{tab:min-put_loc}
\end{table}

\begin{table}[h!]
  \centering\begin{tabular}{cccccc}
    \hline
    $Q$ & $\bar{N}$  & Vanilla\phantom{$\Big|$} & $U_0^Q$ & $\hat{U}_0^Q$ \\
    \hline
    1000000 & 1 & False & 24.12 & 24.14 \\
    1000000 & 1 & True & 22.93 & 22.93 \\
    1000000 & 5 & False & 23.93 & 23.98 \\
    1000000 & 5 & True & 22.85 & 22.90 \\
    2000000 & 10 & False & 23.89 & 23.93 \\
    2000000 & 10 & True & 22.85 & 22.88 \\
    \hline
  \end{tabular}
  \caption{Prices for a put on the minimum of two assets with a basis of polynomial functions of degree $5$ and parameters $S_0=(120,100)$, $\sigma=(0.4,0.8)$,  $\rho=0$, $\delta=0$, $T=0.5$, $r=0.06$, $K=100$, $N=10$. The Longstaff-Schwartz algorithm with a polynomial approximation of degree $5$ yields a price of $22.6$.}\label{tab:min-put_pol}
\end{table}

We have drawn in Figure~\ref{fig:min-put_loc} the P\&L of the hedging strategy obtained with the local basis. We again observe that adding European options in the hedging strategy does not only improve the selling price but also reduces the variance of the P\&L. The variance is reduced from 36.6 to 4. between Figure~\ref{fig:min-put_loc_a} and~\ref{fig:min-put_loc_b}. In addition, we note that increasing the rebalancing frequency does not improve that much the P\&L distribution on the portfolio with European options: the variance is only reduced from 4 in Figure~\ref{fig:min-put_loc_b} to 2.96 in Figure~\ref{fig:min-put_loc_c} .

\begin{figure}[h!]
  \begin{subfigure}{0.33\textwidth}
    \includegraphics[width=\textwidth]{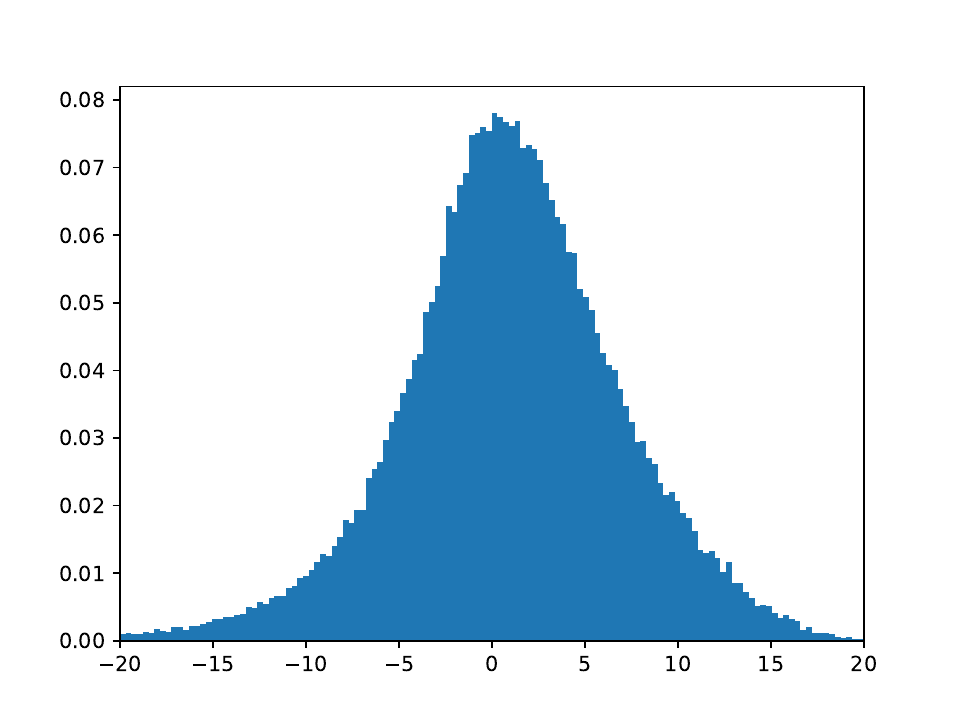}\caption{$\bar{N}=1$, $P=10$, $Q=10^6$.}\label{fig:min-put_loc_a}
  \end{subfigure}
  \begin{subfigure}{0.33\textwidth}
    \includegraphics[width=\textwidth]{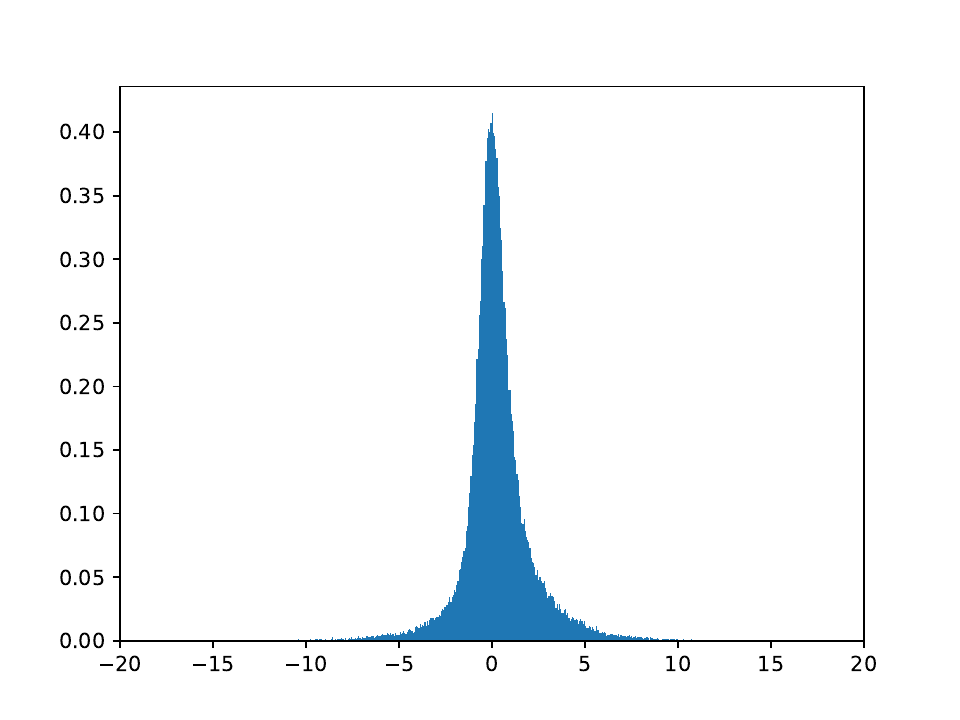}\caption{$\bar{N}=1$, $P=10$, $Q= 10^6$.}\label{fig:min-put_loc_b}
  \end{subfigure} 
  \begin{subfigure}{0.33\textwidth}
    \includegraphics[width=\textwidth]{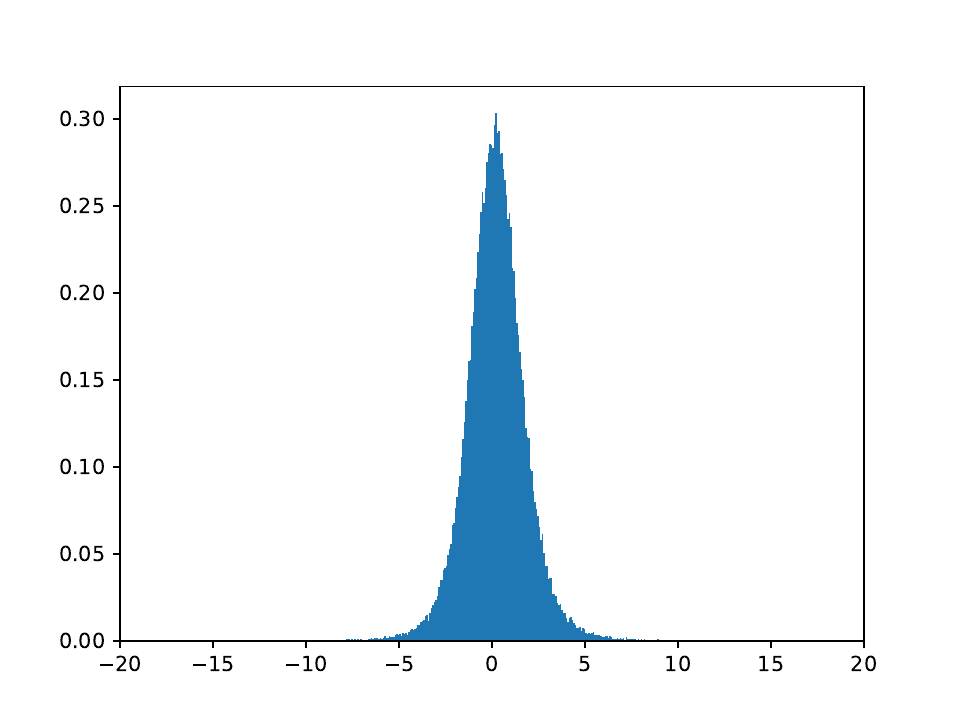}\caption{$\bar{N}=10$, $P=10$, $Q=4\times 10^6$.}\label{fig:min-put_loc_c}
\end{subfigure}
  \caption{P\&L histograms of the hedging strategy for the Bermudan min-put option of Table~\ref{tab:min-put_loc}  for the stock only strategy (Figure \ref{fig:min-put_loc_a}) and the strategy using extra European options (Figures \ref{fig:min-put_loc_b} and \ref{fig:min-put_loc_c}). }\label{fig:min-put_loc}
\end{figure}

\subsubsection{Basket option}

We consider a put option on a basket of assets with payoff function
\[ \Psi(S)=\left(K - \inv{d}\sum_{i = 1}^d S^i \right)_+, \  S \in \R^d. \]

We have indicated in Table~\ref{tab:basket_3d_locasset} the prices given by our pure dual algorithm when using the local basis. We get results that are fairly close to the Longstaff and Schwarz price with $\bar{N}=10$, $P=10$ and $Q=10^7$. Note that in dimension~3, the size of the local basis is $P^3$ and therefore $P$ cannot be taken too large in practice. The total number of parameters to estimate in the regression is $N\times \bar{N} \times P^3\times \bar{d}$, and we need to have $Q$ large enough which makes the computation intensive. We have got similar results with the polynomial regression that are not included here.

To get around this difficulty, we have considered a local basis of size~$P$ on the signed payoff  $K - \inv{d}\sum_{i = 1}^d S^i $, as explained in Appendix~\ref{App_basis}. The results are reported in Table~\ref{tab:basket_3d_locpayoff}.  We get better results with much less computational effort: with $\bar{N}=10$, $P=50$ and only $Q=10^6$, we get a price of $4.11$. Contrary to our previous examples, we note on this example that the use of European options in the hedging strategy does not really improve it. This is confirmed by the P\&L histograms in Figure~\ref{fig:basket_3d_locpayoff}, which shows that the P\&L variance is even slightly increased when including European options in the hedging portfolio.  
This may be due to our choice of European options in the portfolio, but we leave the investigation of other choices for further research. Note that this exhibits an other interest of our algorithm: it can be used as a way to select the assets to be included in the hedging portfolio.

\begin{table}[htb!]
  \centering\begin{tabular}{cccccc}
    \hline
    $Q$ & $\bar{N}$ & $P$ & Vanilla\phantom{$\Big|$} & $U_0^Q$ & $\hat{U}_0^Q$ \\
    \hline
    2000000 & 1 & 10 & False & 4.35 & 4.37 \\
    2000000 & 1 & 10 & True & 4.20 & 4.25 \\
    5000000 & 5 & 10 & False & 4.15 & 4.19 \\
    5000000 & 5 & 10 & True & 4.08 & 4.16 \\
    10000000 & 10 & 10 & False & 4.12 & 4.16 \\
    10000000 & 10 & 10 & True & 4.07 & 4.15 \\
    \hline
    \end{tabular}
  \caption{Prices for a basket put option in dimension $d=3$ using a basis of local functions with $K = S_0 = 100$, $T = 1$, $r=0.05$, $\sigma^i=0.2$, $\rho = 0.3$ and $10$ exercising dates. The Longstaff Schwartz algorithm with a polynomial approximation of order $3$ gives $4.03$.}\label{tab:basket_3d_locasset}
\end{table}

\begin{table}[htb!]
    \centering\begin{tabular}{cccccc}
    \hline
    $Q$ & $\bar{N}$ & $P$ & Vanilla\phantom{$\Big|$} & $U_0^Q$ & $\hat{U}_0^Q$ \\
    \hline
    100000 & 1 & 50 & False & 4.32 & 4.34 \\
    100000 & 1 & 50 & True & 4.29 & 4.32 \\
    250000 & 5 & 50 & False & 4.11 & 4.15 \\
    250000 & 5 & 50 & True & 4.09 & 4.16 \\
    500000 & 10 & 50 & False & 4.07 & 4.11 \\
    500000 & 10 & 50 & True & 4.05 & 4.13 \\
    1000000 & 10 & 50 & False & 4.08 & 4.11 \\
    1000000 & 10 & 50 & True & 4.07 & 4.12 \\
    \hline
  \end{tabular}
  \caption{Prices for a put option on $3$-dimensional basket using a basis of local functions of the signed payoff (see Equation~\eqref{def_upij_localpayoff})  with $K = S_0 = 100$, $T = 1$, $r=0.05$, $\sigma^i=0.2$, $\rho = 0.3$ and $10$ exercising dates. The Longstaff Schwartz algorithm with a polynomial approximation of order $3$ gives $4.03$.}\label{tab:basket_3d_locpayoff}
\end{table}

\begin{figure}[h!]
  \begin{center}
  \includegraphics[width=0.4\textwidth]{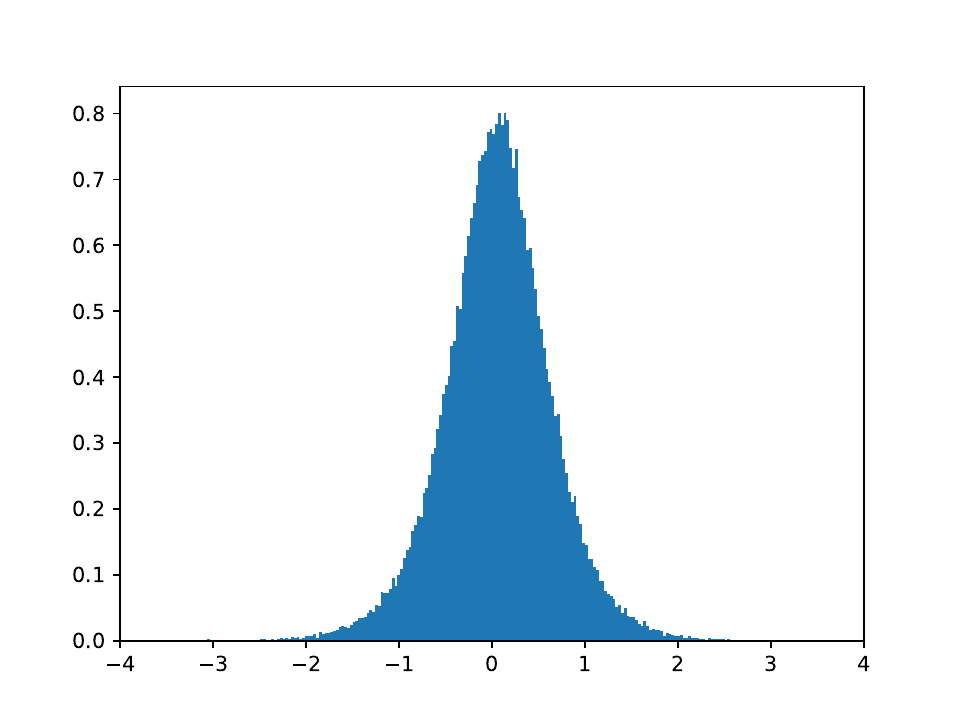}
  \includegraphics[width=0.4\textwidth]{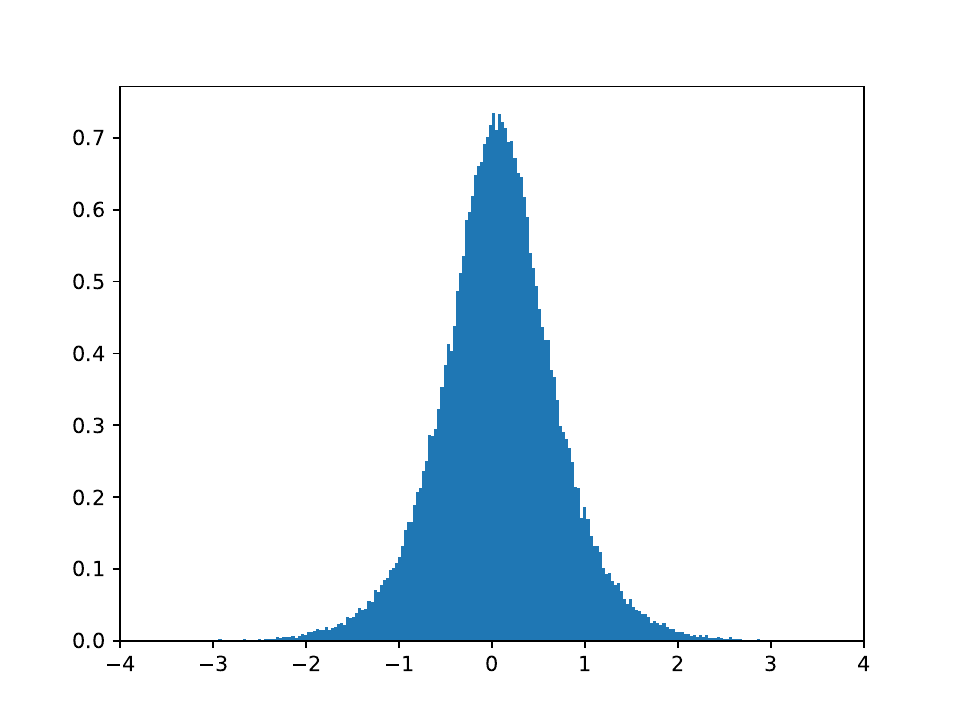}  
  \end{center}
  \caption{P\&L histograms of the hedging strategy for the Bermudan max-call option of Table~\ref{tab:basket_3d_locpayoff} obtained with $\bar{N}=10$, $P=50$, $Q=10^6$ for the stock only strategy (left) and the strategy using extra European options (right). }\label{fig:basket_3d_locpayoff}
\end{figure}

~\\
\noindent {\bf Data availability statement:} Data sharing not applicable to this article as no datasets were generated or analysed during the current study.\\
\noindent {\bf Conflict of interest:} The authors have no conflict of interest to declare.

\appendix
\section{Presentation of the basis functions}\label{App_basis} 

In this appendix, we present in more details the different choices that we have used for the functions $u^p_{i,j}$. These function are involved in Equation~\eqref{def_mg_incr} to define the subvector space of martingale increments.

For $d', P\in \N^*$ and ${\bf p}=(p_1,\dots,p_{d'})\in \{0,\dots, P-1\}^{d'}$, we define  the function 
\[
u^{d',\bf p}(x) = \prod_{i=1}^{d'} \ind{\left[\frac{p_i}{P}, \frac{p_i + 1}{P}\right)}(x_i), \, \text{ with } x=(x_1,\dots,x_{d'})\in \R^{d'},
\]
These functions are the indicator functions of the elementary hypercubes. 

\subsubsection*{Local regression basis}
For each asset $S^k$ and each time $t_{i,j}$ defined by~\eqref{def_subticks}, we consider a mapping  $\varphi^k_{i,j}:\R_+\to [0,1)$ so that $\varphi_{i,j}:\R_+^d\to [0,1)^d$. In our numerical experiments, we take for $\varphi^k_{i,j}$ the cumulative distribution function of a lognormal distribution with mean $\E(S^k_{t_{i,j}})$ and variance $\Var(S^k_{t_{i,j}})$. These quantities are estimated empirically on the samples at time $t_{i,j}$. Then, we define 
for $\Bar{P}=P^d$ and $1\le p\le \bar{P}$, 
\begin{equation}\label{def_upij_local} u^p_{i,j}=u^{d,{\bf p}} \circ \varphi_{i,j},\end{equation}
with $p-1=p_1+p_2\times P+\dots p_d \times P^{d-1}$ and ${\bf p}=(p_1,\dots,p_d)$ (here, the $d$-tuple ${\bf p}\in \{0,\dots,P-1 \}^{d}$ is represented by one natural number~$p\in \{1,\dots,\bar{P}\}$). 

Now, let us make precise how to solve the linear system~\eqref{linear_system_alpha} with this choice. We use the local property of the regression basis $u^{d,{\bf p}}(x)u^{d,{\bf p'}}(x)=0$ for ${\bf p}\not = {\bf p'}$. This gives for each $i$, $j$ and ${\bf p}$ the linear system
\begin{align*}\forall k \in \{1,\dots, \bar{d} \}, \ \sum_{k'=1}^{\bar{d}}  &\alpha_{i,j}^{{\bf p},k'} \E[(X^{{\bf p},k}_{t_{i,j}}-X^{{\bf p},k}_{t_{i,j-1}})(X^{{\bf p},k'}_{t_{i,j}}-X^{{\bf p},k'}_{t_{i,j-1}})]\\=&\E\left[(X^{{\bf p},k}_{t_{i,j}}-X^{{\bf p},k}_{t_{i,j-1}}) \left(\max_{i + 1\le m \le N} \left\{ Z_{m} - \sum_{\ell=i+2}^{m} \alpha_\ell \cdot \Delta X_\ell \right\}  \right)\right].
\end{align*}
Thus, for each $i$ and $j$, solving the linear system~\eqref{linear_system_alpha} of size $P^{d}\times \bar{d}$ determining $\alpha_{i,j}^{{\bf \cdot},\cdot}$ boils down to solving $P^{d}$ linear systems of size~$\bar{d}$, each of them determining $\alpha_{i,j}^{{\bf p},\cdot}$. Then, the Monte-Carlo estimator of $\alpha_{i,j}^{{\bf p},k'}$, denoted by $\alpha_{i,j}^{{\bf p},k',Q}$ is obtained by solving
\begin{align*}\sum_{k'=1}^{\bar{d}}  &\alpha_{i,j}^{{\bf p},k',Q}  \frac 1 Q  \sum_{q=1}^Q [(X^{{\bf p},k,q}_{t_{i,j}}-X^{{\bf p},k,q}_{t_{i,j-1}})(X^{{\bf p},k',q}_{t_{i,j}}-X^{{\bf p},k',q}_{t_{i,j-1}})]\\=&\frac 1 Q \sum_{q=1}^Q(X^{{\bf p},k,q}_{t_{i,j}}-X^{{\bf p},k,q}_{t_{i,j-1}}) \left(\max_{i + 1\le m \le N} \left\{ Z^q_{m} - \sum_{\ell=i+2}^{m} \alpha^Q_\ell \cdot \Delta X_\ell^q \right\}  \right).
\end{align*}

\subsubsection*{Local regression basis on the signed payoff}

The main advantage of the local regression basis is that it makes it simpler to solve the linear system, as we have explained above. Its main drawback is to heavily suffer from the curse of dimensionality. To get around this difficulty, a possibility is to look for alternative choices. For example, when considering a basket option with payoff $\left(K- \frac 1d \sum_{k=1}^d S^k_n \right)_+$,
it is rather natural to consider a local basis associated with the signed payoff function. Namely, we define $\psi_{i,j}(s)= \varphi_{i,j}\left( K- \frac 1d \sum_{k=1}^d s^k\right)$, where $\varphi_{i,j}: \R \to [0,1]$. In practice, we use for $\varphi_{i,j}$ the cumulative distribution function of a normal distribution with mean $\E[K- \frac 1d \sum_{k=1}^d S^k_{t_{i,j}}]$ and variance $\Var(K- \frac 1d \sum_{k=1}^d S^k_{t_{i,j}})$
These quantities are estimated empirically on the samples at time $t_{i,j-1}$. Then, we define
\begin{equation}\label{def_upij_localpayoff} u^p_{i,j}=u^{1,p} \circ \psi_{i,j}, \ 1 \le p \le P=\bar{P}.  
\end{equation}
This choice allows us to take advantage of the local structure while keeping only a moderate number of basis functions. Thus, the linear system~\eqref{linear_system_alpha} can be solved from $P$ linear systems of size $\bar{d}$ (to be compared with the $P^d$ linear systems with the previous local basis).

\subsubsection*{Polynomial functions}

This is a rather classical choice. For $\eta \in \N$, there are $\bar{P}=\binom{d+\eta}{\eta}$ monomials of degree lower or equal to~${\eta}$. If $(u^p)_{1\le p\le \bar{P}}$ denotes this monomial basis, we simply take 
\begin{equation}\label{def_upij_polynomial}
  u^p_{i,j}=u^p\circ \varphi_{i,j},
\end{equation} where $\varphi_{i,j}:(\R_+)^k \to \R^k$ is a linear map that essentially maps the assets values to $[-1,1]$ i.e. $\P(\varphi^k_{i,j}(S^k_{t_{i,j}}) \not \in [-1,1]) \ll 1$ for each $k\in \{1,\dots,d\}$. In the Black-Scholes model~\eqref{eq:BS_model}, we have taken $\varphi^k_{i,j}(x)=\frac{x-C_{i,j}^{k,-}}{C_{i,j}^{k,+}-C_{i,j}^{k,-}}$ with $C_{i,j}^{k,\pm}=S^k_0 \exp\left( \left(r-\delta^k- \frac 12 (\sigma^k)^2\right)t_{i,j} \pm 4 \sqrt{t_{i,j}}\sigma^k\right)$ since $\P(W_1 \not \in [-4,4])\ll 1$.

\bibliographystyle{abbrvnat}
\bibliography{biblio.bib}

\end{document}